\pdfpagewidth=8.5in
\pdfpageheight=11in

\pdfoutput=1
\documentclass[letterpaper,11pt]{article}
\usepackage[margin=1in]{geometry}
\usepackage{amsthm}
\usepackage{amsfonts}
\usepackage{amssymb}
\usepackage{amsmath}
\usepackage[linesnumbered,algoruled,noend]{algorithm2e}
\SetNlSty{textbf}{}{.}
\usepackage[newenum]{paralist}
\usepackage[usenames,dvipsnames]{xcolor}
\usepackage[algoruled,noend]{algorithm2e}

\usepackage{amsthm}
\newtheorem{theorem}{Theorem}[section]
\newtheorem{lemma}[theorem]{Lemma}

\newtheorem{corollary}[theorem]{Corollary}
\newtheorem{problem}[theorem]{Problem}

\newcommand{\namedref}[2]{\hyperref[#2]{#1~\ref*{#2}}}
\newcommand{\sectionref}[1]{\namedref{Section}{#1}}

\newcommand{\theoremref}[1]{\namedref{Theorem}{#1}}

\newcommand{\lemmaref}[1]{\namedref{Lemma}{#1}}

\newcommand{\corollaryref}[1]{\namedref{Corollary}{#1}}
\newcommand{\probref}[1]{\namedref{Problem}{#1}}
\newcommand{\algref}[1]{\namedref{Algorithm}{#1}}

\newcommand{\stepref}[1]{\namedref{Step}{#1}}
\newcommand{\equalityref}[1]{\hyperref[#1]{Equality~\eqref{#1}}}
\newcommand{\inequalityref}[1]{\hyperref[#1]{Inequality~\eqref{#1}}}

\newcommand{\BO}{\mathcal{O}}
\newcommand{\N}{\mathbb{N}}

\setcounter{tocdepth}{3}

\hypersetup{colorlinks,linkcolor=blue,filecolor=blue,citecolor=blue,urlcolor=blue,pdfstartview=FitH}

\begin{document}
\title{Optimal Deterministic Routing and Sorting\\
on the Congested Clique}
\author
{Christoph Lenzen \\ Massachusetts Institute of Technology \\ Cambridge,
MA, USA \\ clenzen@csail.mit.edu}
\date{}
\maketitle

\begin{abstract}
Consider a clique of $n$ nodes, where in each synchronous round each pair of
nodes can exchange $\BO(\log n)$ bits. We provide deterministic constant-time
solutions for two problems in this model. The first is a routing problem where
each node is source and destination of $n$ messages of size $\BO(\log n)$. The
second is a sorting problem where each node $i$ is given $n$ keys of size
$\BO(\log n)$ and needs to receive the $i^{th}$ batch of $n$ keys according to
the global order of the keys. The latter result also implies deterministic
constant-round solutions for related problems such as selection or determining
modes.
\end{abstract}
\section{Introduction \& Related Work}\label{sec:intro}

Arguably, one of the most fundamental questions in distributed computing is what
amount of communication is required to solve a given task. For systems where
communication is dominating the ``cost''---be it the time to communicate
information, the money to purchase or rent the required infrastructure, or
any other measure derived from a notion of communication
complexity---exploring the imposed limitations may lead to more efficient
solutions.

Clearly, in such systems it does not make sense to make the complete input
available to all nodes, as this would be too expensive; typically, the same is
true for the output. For this reason, one assumes that each node is given a part
of the input, and each node needs to compute a corresponding part of the output.
For graph theoretic questions, the local input comprises the neighborhood of the
node in the respective graph, potentially augmented by weights for its incident
edges or similar information that is part of the problem specification. The
local output then e.g.\ consists of indication of membership in a set forming
the global solution (a dominating set, independent set, vertex cover, etc.),
a value between $0$ and $1$ (for the fractional versions), a color, etc. For
verification problems, one is satisfied if for a valid solution all nodes output
``yes'' and at least one node outputs ``no'' for an invalid solution.

Since the advent of distributed computing, a main research focus has been the
\emph{locality} of such computational problems. Obviously, one cannot compute,
or even verify, a spanning tree in less than $D$ synchronous communication
rounds, where $D$ is the diameter of the graph, as it is impossible to ensure
that a subgraph is acyclic without knowing it completely. Formally, the
respective lower bound argues that there are instances for which no node can
reliably distinguish between a tree and a non-tree since only the local
graph topology (and the parts of the prospective solution) up to distance $R$
can affect the information available to a node after $R$ rounds. More subtle
such \emph{indistinguishability} results apply to problems that \emph{can} be
solved in $o(D)$ time (see e.g.~\cite{kuhn10,linial92,naor91}).

This type of argument breaks down in systems where all nodes can communicate
directly or within a few number of rounds. However, this does not necessitate
the existence of efficient solutions, as due to limited bandwidth usually one
has to be selective in what information to actually communicate. This renders
otherwise trivial tasks much harder, giving rise to strong lower bounds. For
instance, there are $n$-node graphs of constant diameter on which finding or
verifying a spanning tree and many related problems require
$\tilde{\Omega}(\sqrt{n})$ rounds if messages contain a number of bits that is
polylogarithmic in $n$~\cite{elkin06,peleg00near,dasSarma2011}; approximating
the diameter up to factor $3/2-\varepsilon$ or determining it exactly cannot be
done in $\tilde{o}(\sqrt{n})$ and $\tilde{o}(n)$ rounds,
respectively~\cite{frischknecht12}. These and similar lower bounds consider
specific graphs whose topology prohibits to communicate efficiently. While the
diameters of these graphs are low, necessitating a certain connectivity, the
edges ensuring this property are few. Hence, it is impossible to transmit a
linear amount of bits between some nodes of the graph quickly, which forms the
basis of the above impossibility results.

This poses the question whether non-trivial lower bounds also hold in the case
where the communication graph is well-connected. After all, there are many
networks that do not feature small cuts, some due to natural expansion
properties, others by design. Also, e.g.\ in overlay networks, the underlying
network structure might be hidden entirely and algorithms may effectively
operate in a fully connected system, albeit facing bandwidth limitations.
Furthermore, while for scalability reasons full connectivity may not be
applicable on a system-wide level, it could prove useful to connect multiple
cliques that are not too large by a sparser high-level topology.

These considerations motivate to study distributed algorithms for a fully
connected system of $n$ nodes subject to a bandwidth limitation of $\BO(\log n)$
bits per round and edge, which is the topic of the present paper. Note that such
a system is very powerful in terms of communication, as each node can send and
receive $\Theta(n \log n)$ bits in each round, summing up to a total of
$\Theta(n^2\log n)$ bits per round. Consequently, it is not too surprising that,
to the best of our knowledge, so far no negative results for this model have
been published. On the positive side, a minimum spanning tree can be constructed
in $\BO(\log \log n)$ rounds~\cite{lotker06}, and, given to each node the
neighbors of a corresponding node in some graph as input, it can be decided
within $\BO(n^{1/3}/\log n)$ rounds whether the input graph contains a
triangle~\cite{lenzen12}. These bounds are deterministic;
constant-round randomized algorithms have been devised for the
routing~\cite{lenzen11} and sorting~\cite{patt-shamir11} tasks that we solve
deterministically in this work. The randomized solutions are about $2$ times as
fast, but there is no indication that the best deterministic algorithms are
slower than the best randomized algorithms.

\subsection*{Contribution}

We show that the following closely related problems
can be deterministically solved, within a constant number of communication
rounds in a fully connected system where messages are of size $\BO(\log n)$.
\begin{compactenum}
  \item [\textbf{Routing:}] Each node is source and destination of (up to) $n$
  messages of size $\BO(\log n)$. Initially only the sources know destinations
  and contents of their messages. Each node needs to learn all
  messages it is the destination of. (\sectionref{sec:routing})
  \item [\textbf{Sorting:}] Each node is given (up to) $n$ comparable keys of
  size $\BO(\log n)$. Node $i$ needs to learn about the keys with indices
  $(i-1)n+1,\ldots,in$ in a global enumeration of the keys that respects their
  order. Alternatively, we can require that nodes need to learn the indices of
  their keys in the total order of the union of all keys (i.e., all duplicate
  keys get the same index). Note that this implies constant-round solutions for
  related problems like selection or determining modes. (\sectionref{sec:sort})
\end{compactenum}
We note that the randomized algorithms from previous work are structurally very
different from the presented deterministic solutions. They rely on
near-uniformity of load distributions obtained by choosing intermediate
destinations uniformly and independently at random, in order to achieve
bandwidth-efficient communication. In contrast, the presented approach achieves
this in a style that has the flavor of a recursive sorting algorithm (with
a single level of recursion).

While our results are no lower bounds for well-connected systems under the
CONGEST model, they shed some light on why it is hard to prove impossibilities
in this setting: Even without randomization, the overhead required for
coordinating the efforts of the nodes is constant. In particular, any potential
lower bound for the considered model must, up to constant factors, also apply in
a system where each node can in each round send and receive $\Theta(n\log n)$
bits to and from arbitrary nodes in the system, with no further constraints on
communication.

We note that due to this observation, our results on sorting can equally well be
followed as corollaries of our routing result and Goodrich's sorting algorithm
for a bulk-synchronous model~\cite{goodrich99}. However, the derived algorithm
is more involved and requires at least an order of magnitude more rounds.

Since for such fundamental tasks as routing and sorting the amount of local
computations and memory may be of concern, we show in
\sectionref{sec:computations} how our algorithms can be adapted to require
$\BO(n\log n)$ computational steps and memory bits per node.
Trivially, these bounds are near-optimal with respect to computations and
optimal with respect to memory (if the size of the messages that are to be
exchanged between the nodes is $\Theta(\log n)$).
To complete the picture, in \sectionref{sec:size} we vary the parameters of
bandwidth, message/key size, and number of messages/keys per node. Our
techniques are sufficient to obtain asymptotically optimal results for almost
the entire range of parameters. For keys of size $o(\log n)$, we show that in
fact a huge number of keys can be sorted quickly; this is the special case for
which our bounds might not be asymptotically tight.

\section{Model}\label{sec:model}

In brief, we assume a fully connected system of $n$ nodes under the congestion
model. The nodes have unique identifiers $1$ to $n$ that are known to all other
nodes. Computation proceeds in synchronous rounds, where in each round, each
node performs arbitrary, finite computations,\footnote{Our algorithms will
perform polynomial computations with small exponent only.} sends a message to
each other node, and receives the messages sent by other nodes. Messages are of
size $\BO(\log n)$, i.e., in each message nodes may encode a constant number of
integer numbers that are polynomially bounded in $n$.\footnote{We will not
discuss this constraint when presenting our algorithms and only reason in a few
places why messages are not too large; mostly, this should be obvious from the
context.} To simplify the presentation, nodes will treat also themselves as
receivers, i.e., node $i\in \{1,\ldots,n\}$ will send messages to itself like to
any other node $j\neq i$.

These model assumptions correspond to the congestion model on the complete graph
$K_n=(V,\binom{V}{2})$ on the node set $V=\{1,\ldots,n\}$ (cf.~\cite{peleg00}).
We stress that in a given round, a node may send different messages along each
of its edges and thus can convey a total of $\Theta(n \log n)$ bits of
information. As our results demonstrate, this makes the considered model much
stronger than one where in any given round a node must broadcast the same
$\Theta(\log n)$ bits to all other nodes.

When measuring the complexity of computations performed by the nodes, we assume
that basic arithmetic operations on $\BO(\log n)$-sized values are a single
computational step.

\section{Routing}\label{sec:routing}

In this section, we derive a deterministic solution to the following task
introduced in~\cite{lenzen11}.
\begin{problem}[Information Distribution Task]\label{prob:idt}\ \\
Each node $i\in V$ is given a set of $n$ messages of size $\BO(\log n)$
\begin{equation*}
{\cal S}_i=\{m_i^1,\ldots,m_i^n\}
\end{equation*}
with destinations $d(m_i^j)\in V$, $j\in \{1,\ldots,n\}$. Messages are globally
lexicographically ordered by their source $i$, their destination $d(m_i^j)$, and
$j$. For simplicity, each such message explicitly contains these values, in
particular making them distinguishable. The goal is to deliver all messages to
their destinations, minimizing the total number of rounds. By
\begin{equation*}
{\cal R}_k:=\left\{m_i^j\in \bigcup_{i\in V}{\cal
S}_i\,\Bigg|\,d(m_i^j)=k\right\}
\end{equation*}
we denote the set of messages a node $k\in V$ shall receive. We require that
$|{\cal R}_k|= n$ for all $k\in V$, i.e., also the number of messages a
single node needs to receive is $n$.
\end{problem}
We remark that it is trivial to relax the requirement that each node needs to
send and receive \emph{exactly} $n$ messages; this assumption is made to
simplify the presentation. If each node sends/receives at most $n$ messages, our
techniques can be applied without change, and instances with more than $n$
sent/received messages per node can be split up into smaller ones.

\subsection{Basic Communication Primitives}
Let us first establish some basic communication patterns our algorithms will
employ. We will utilize the following classical result.

\begin{theorem}[Koenig's Line Coloring Theorem]\label{theorem:koenig}\ \\
Every $d$-regular bipartite multigraph is a disjoint union of $d$ perfect
matchings.
\end{theorem}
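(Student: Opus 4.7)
The plan is to prove the theorem by induction on the regularity degree $d$, with the base case $d=1$ being immediate since a $1$-regular bipartite multigraph is by definition a perfect matching. For the inductive step, it suffices to show that every $d$-regular bipartite multigraph $G=(A\cup B,E)$ with $d\geq 1$ contains a perfect matching $M$: once established, the multigraph $G-M$ is $(d-1)$-regular and bipartite, so the induction hypothesis decomposes it into $d-1$ perfect matchings, which together with $M$ yields the desired decomposition.

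The main (and essentially only) obstacle is the existence of the perfect matching $M$. I would obtain it from Hall's marriage theorem. First observe that in a $d$-regular bipartite multigraph with parts $A$ and $B$ we have $d|A|=|E|=d|B|$, so $|A|=|B|$ and it is enough to verify Hall's condition on $A$. For an arbitrary subset $S\subseteq A$, every edge incident to $S$ has its other endpoint in the neighborhood $N(S)\subseteq B$. Counting edges incident to $S$ from the $A$-side gives exactly $d|S|$ (by $d$-regularity), while counting them from the $B$-side gives at most $d|N(S)|$ (each vertex of $N(S)$ has degree $d$ in $G$, hence at most $d$ edges going into $S$). Therefore $d|S|\leq d|N(S)|$, i.e., $|N(S)|\geq|S|$, so Hall's condition holds and a perfect matching $M$ exists.

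The argument is standard, and multi-edges cause no trouble because the edge-counting above is done with multiplicities, and Hall's theorem applies equally to the bipartite simple graph obtained by identifying parallel edges (the existence of a perfect matching in that simple graph immediately yields one in the multigraph). Combining the matching existence with the inductive step completes the proof.
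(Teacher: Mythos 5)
Your proof is correct. The paper does not actually prove this theorem---it only cites Theorem 1.4.18 of Lov\'asz's textbook---and your induction-on-$d$ argument, extracting a perfect matching at each step via Hall's condition (verified by the double count $d|S|\leq d|N(S)|$, with the passage to the underlying simple graph handling parallel edges), is precisely the standard proof one finds in such references.
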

\begin{proof}
See e.g.\ Theorem 1.4.18 in \cite{lovasz09}.
\end{proof}
We remark that such an optimal coloring can be computed
efficiently~\cite{cole01}.\footnote{Also, a simple greedy coloring of the line
graph results in at most $2d-1$ (imperfect) matchings, which is sufficient for our
purposes. This will be used in \sectionref{sec:computations} to reduce the
amount of computations performed by the algorithm.}

Using this theorem, we can solve \probref{prob:idt} efficiently provided that it
is known a priori to all nodes what the sources and destinations of messages
are, an observation already made in~\cite{lenzen12}. We will however need a more
general statement applying to subsets of nodes that want to communicate among
themselves. To this end, we first formulate a generalization of the
result from~\cite{lenzen12}.

\begin{corollary}\label{coro:2_round}
We are given a subset $W\subseteq V$ and a bulk of messages such that the
following holds.
\begin{compactenum}
\item The source and destination of each message is in $W$.
\item The source and destination of each message is known in advance to all
nodes in $W$, and each source knows the contents of the messages to send.
\item Each node is the source of $f|W|$ messages, where $f:=\lfloor
n/|W|\rfloor$.
\item Each node is the destination of $f|W|$ messages.
\end{compactenum}
Then a routing scheme to deliver all messages in $2$ rounds can be found
efficiently. The routing scheme makes use of edges with at least one endpoint in
$W$ only.
\end{corollary}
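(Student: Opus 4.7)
The plan is to reduce the problem to a bipartite edge-coloring and route via intermediate relay nodes. I would first model the message traffic as a bipartite multigraph $H$ whose two sides are two copies of $W$ (one for sources, one for destinations), with one edge per message to be delivered. By assumptions (3) and (4), every vertex of $H$ has degree exactly $f|W|$, so $H$ is $(f|W|)$-regular. By Koenig's Line Coloring Theorem (\theoremref{theorem:koenig}), $H$ decomposes into $f|W|$ perfect matchings $M_1,\ldots,M_{f|W|}$, and by assumption (2) this decomposition can be computed by every node in $W$ from the shared knowledge of sources and destinations.

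Since $f=\lfloor n/|W|\rfloor$, we have $f|W|\leq n$, so we may injectively assign to each matching $M_k$ a distinct relay node $v_k\in V$. The routing scheme is then: in round~1, for each edge $(u,w)\in M_k$, the source $u\in W$ sends the corresponding message to $v_k$; in round~2, $v_k$ forwards it to $w\in W$. The second phase is unambiguous because each node in $W$ also knows the whole assignment.

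I would then verify the bandwidth constraints, which is the only technical point. In round~1, a fixed source $u\in W$ sends exactly one message per matching to which it contributes an edge, and each matching uses a different relay, so $u$ sends at most one message per edge of $K_n$. Dually, a relay $v_k$ receives at most one message per $u\in W$, namely the unique message corresponding to the $M_k$-edge incident to $u$. Round~2 is symmetric: $v_k$ sends at most one message per destination $w\in W$, and $w$ receives exactly one message per matching, totalling $f|W|$ as required. Every edge used has at least one endpoint in $W$ (either source/destination in $W$ and relay arbitrary, or the relay itself happens to lie in $W$).

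The only subtle step is invoking Koenig's theorem with the correct graph---once $H$ is identified, regularity is immediate and the matchings automatically produce a conflict-free two-round schedule; I do not anticipate additional obstacles. Efficiency of computing the decomposition follows from the cited result of Cole et al., noted in the remark after \theoremref{theorem:koenig}.
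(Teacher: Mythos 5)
Your proof is correct and follows essentially the same route as the paper: model the messages as an $(f|W|)$-regular bipartite multigraph on two copies of $W$, decompose it into $f|W|\leq n$ perfect matchings via Koenig's theorem, and use one distinct relay node per matching for a two-round send--forward schedule. The bandwidth and "edges touching $W$" checks you give match the paper's argument.
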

\begin{proof}
Consider the bipartite multigraph $G=(S\dot{\cup}R,E)$ with $|S|=|R|=|W|$, where
$S=\{1_s,\ldots,|W|_s\}$ and $R=\{1_r,\ldots,|W|_r\}$ represent the nodes in
their roles as senders and receivers, respectively, and each input message at
some node $i$ that is destined for some node $j$ induces an edge from $i_s$
to~$j_r$.

By \theoremref{theorem:koenig}, we can color the edge set of $G$ with
$m:=f|W|\leq n$ colors such that no two edges with the same color have a node in
common. Moreover, as all nodes are aware of the source and destination of each
message, they can deterministically and locally compute the same such coloring,
without the need to communicate. Now, in the first communication round, each
node sends its (unique) message of color $c\in \{1,\ldots,m\}$ to node $c$. As
each node holds exactly one message of each color, at most one message is sent
over each edge, i.e., by the assumptions of the corollary this step can indeed
be performed in one round. Observe that this rule ensures that each node will
receive exactly one message of each color in the first round. Hence, because the
coloring also guarantees that each node is the destination of exactly one
message of each color, it follows for each $i,j\in \{1,\ldots,n\}$ that node $i$
receives exactly $f$ messages that need to be delivered to node $j$ in the first
round. Therefore all messages can be delivered by directly sending them to their
destinations in the second round.
\end{proof}

We stress that we can apply this result concurrently to multiple disjoint
sets $W$, provided that each of them satisfies the prerequisites of the
corollary: since in each routing step, each edge has at least one endpoint in
$W$, there will never be an edge which needs to convey more than one message in
each direction. This is vital for the success of our algorithms.

An observation that will prove crucial for our further reasoning is that for
subsets of size at most $\sqrt{n}$, the amount of information that needs to be
exchanged in order to establish common knowledge on the sources and destinations
of messages becomes sufficiently small to be handled. Since this information
itself consists, for each node, of $|W|$ numbers that need to be communicated to
$|W|\leq n/|W|$ nodes---with sources and destination known a priori!---we can
solve the problem for \emph{unknown} sources and destinations by applying the
previous corollary twice.
\begin{corollary}\label{coro:4_round}
We are given a subset $W\subseteq V$, where $|W|\leq \sqrt{n}$, and a bulk of
messages such that the following holds.
\begin{compactenum}
\item The source and destination of each message is in $W$.
\item Each source knows the contents of the messages to send.
\item Each node is the source of $f|W|$ messages, where $f:=\lfloor
n/|W|\rfloor$.
\item Each node is the destination of $f|W|$ messages.
\end{compactenum}
Then a routing scheme to deliver all messages in $4$ rounds can be found
efficiently. The routing scheme makes use of edges with at least one endpoint in
$W$ only.
\end{corollary}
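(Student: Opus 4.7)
The plan is to apply \corollaryref{coro:2_round} twice: a first invocation to make the source/destination structure of the bulk common knowledge inside $W$, and a second invocation to route the actual payloads once that knowledge is in place.

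First I would have every $i\in W$ locally compute the count vector $c_i=(c_{i,j})_{j\in W}$, where $c_{i,j}$ is the number of input messages at $i$ whose destination is $j$. The collection $\{c_i\}_{i\in W}$ encodes exactly the bipartite multigraph $G$ that \corollaryref{coro:2_round} operates on, so it suffices to disseminate these vectors among the nodes of $W$. The resulting meta-routing task is: for every pair $(i,j)\in W\times W$, node $i$ sends the $|W|$ entries of $c_i$ to node $j$. Its sources and destinations are a function of $W$ alone and are therefore known a priori to all nodes; the contents are known to the senders after the local computation. Each $i\in W$ sends and each $j\in W$ receives exactly $|W|^2$ meta-messages. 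Since $|W|\le\sqrt{n}$ we have $|W|^2\le n$ and in particular $|W|^2\le f|W|$, so padding with dummy meta-messages (e.g., self-loops) yields an instance meeting all four premises of \corollaryref{coro:2_round}. That corollary delivers the metadata in $2$ rounds on edges incident to $W$.

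After the meta-phase, every node in $W$ knows the full matrix $(c_{i,j})_{i,j\in W}$, hence the multiset of source/destination pairs of the original bulk. Fixing in advance a canonical order within each batch, say the order inherited from the indices $j$ of \probref{prob:idt}, turns this multiset into a concrete assignment of the $f|W|$ virtual messages per node to (source, destination) pairs that is common knowledge inside $W$; each source $i$ separately knows which of its actual payloads plays the role of which virtual message. A second application of \corollaryref{coro:2_round} then routes the real bulk in another $2$ rounds using only edges incident to $W$, bringing the total to $4$.

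The only real obstacle is checking that the meta-phase fits the premises of \corollaryref{coro:2_round}. This reduces to the inequalities $|W|^2\le f|W|\le n$, both of which follow immediately from $|W|\le\sqrt{n}$ together with $f=\lfloor n/|W|\rfloor\ge |W|$, and to verifying that the dummy padding can be chosen so that each node is source and destination of exactly $f|W|$ meta-messages. Everything else—efficient computation of the Koenig coloring, the guarantee that each edge carries at most one message per round in each direction, and the ability to run the scheme in parallel with other disjoint $W$-instances if desired—is inherited directly from \corollaryref{coro:2_round}.
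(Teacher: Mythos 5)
Your proposal is correct and follows essentially the same two-phase approach as the paper: a first application of Corollary~\ref{coro:2_round} to disseminate the per-destination message counts (using $|W|^2\le f|W|$, which holds since $|W|\le\sqrt{n}$), followed by a second application to route the actual bulk. The extra details you supply (dummy padding and the canonical ordering that turns the count matrix into a concrete common-knowledge assignment) are points the paper leaves implicit, but they do not change the argument.
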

\begin{proof}
Each node in $W$ announces the number of messages it holds for each node in $W$
to all nodes in $W$. This requires each node in $W$ to send and receive
$|W|^2\leq f|W|$ messages. As sources and destinations of these helper messages
are known in advance, by \corollaryref{coro:2_round} we can perform this
preprocessing in $2$ rounds. The information received establishes the
preconditions of \corollaryref{coro:2_round} for the original set of messages,
therefore the nodes now can deliver all messages in another two rounds.
\end{proof}

\subsection{Solving the Information Distribution Task}
Equipped with the results from the previous section, we are ready to tackle
\probref{prob:idt}. In the pseudocode of our algorithms, we will use a number of
conventions to allow for a straightforward presentation. When we state that a
message is \emph{moved} to another node, this means that the receiving node will
store a copy and serve as the source of the message in subsequent rounds of the
algorithm, whereas the original source may ``forget'' about the message. A step
where messages are moved is thus an actual routing step of the algorithm; all
other steps serve to prepare the routing steps. The current source of a message
\emph{holds} it. Moreover, we will partition the node set into subsets of size
$\sqrt{n}$, where for simplicity we assume that $\sqrt{n}$ is integer. We will
discuss the general case in the main theorem. We will frequently refer to these
subsets, where $W$ will invariably denote any of the sets in its role as source,
while $W'$ will denote any of the sets in its role as receiver (both with
respect to the current step of the algorithm). Finally, we stress that
statements about moving and sending of messages in the pseudocode do not imply
that the algorithm does so by direct communication between sending and receiving
nodes. Instead, we will discuss fast solutions to the respective (much simpler)
routing problems in our proofs establishing that the described strategies can be
implemented with small running times.

This being said, let us turn our attention to \probref{prob:idt}. The high-level
strategy of our solution is given in \algref{algo:high-level}.
\begin{algorithm}[ht]
\caption{High-level strategy for solving
\probref{prob:idt}.}\label{algo:high-level}
Partition the nodes into the disjoint subsets
  $\{(i-1)\sqrt{n}+1,\ldots,i\sqrt{n}\}$ for $i\in \{1,\ldots,\sqrt{n}\}$.\\
Move the messages such that each such subset $W$ holds exactly $|W||W'|=n$
  messages for each subset $W'$.\nllabel{line:high_2}\\
For each pair of subsets $W$, $W'$, move all messages destined to nodes
  in $W'$ within $W$ such that each node in $W$ holds exactly $|W'|=\sqrt{n}$
  messages with destinations in $W'$.\nllabel{line:high_3}\\
For each pair of subsets $W$, $W'$, move all messages destined to nodes
  in $W'$ from $W$ to $W'$.\nllabel{line:high_4}\\
For each $W$, move all messages within $W$ to their
destinations.\nllabel{line:high_5}\\
\end{algorithm}

Clearly, following this strategy will deliver all messages to their
destinations. In order to prove that it can be deterministically executed in a
constant number of rounds, we now show that all individual steps can be
performed in a constant number of rounds. Obviously, the first step requires no
communication. We leave aside \stepref{line:high_2} for now and turn to
\stepref{line:high_3}.
\begin{corollary}\label{coro:step_3}
\stepref{line:high_3} of \algref{algo:high-level} can be implemented in $4$
rounds.
\end{corollary}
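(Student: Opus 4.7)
The plan is to execute \stepref{line:high_3} by invoking \corollaryref{coro:4_round} inside each of the $\sqrt{n}$ subsets $W$, in parallel. By the design of \stepref{line:high_2}, each node of $W$ holds exactly $n$ messages whose destinations are spread over the $W'$-blocks; writing $c_v^{W'}$ for the number of messages held by $v\in W$ whose destination lies in $W'$, we thus have $\sum_v c_v^{W'}=n$ for every $W'$ and $\sum_{W'} c_v^{W'}=n$ for every $v$. The remaining task is to rearrange these messages inside $W$ so that every node of $W$ ends up with exactly $\sqrt{n}$ messages for each $W'$.

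To cast this as an instance of \corollaryref{coro:4_round}, I would fix a canonical intermediate-destination rule that depends only on the count matrix $(c_v^{W'})$: for each $W'$, order the $n$ messages in $W$ destined to $W'$ by (their current holder's rank in $W$, then their global lex order) and route the $k$-th of them through the $\lceil k/\sqrt{n}\rceil$-th node of $W$. Once every node of $W$ knows the full matrix, each can independently determine the intermediate destination of every message, which is exactly the ``known in advance'' hypothesis of \corollaryref{coro:2_round}. Making the count matrix common knowledge inside $W$ is precisely the preprocessing step in the proof of \corollaryref{coro:4_round}: every node of $W$ sends its $\sqrt{n}$ counts to each of the $\sqrt{n}$ nodes of $W$, i.e., $|W|^2=f|W|=n$ helper messages per node with a priori known sources and destinations, which \corollaryref{coro:2_round} delivers in $2$ rounds.

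A second application of \corollaryref{coro:2_round} inside $W$ then ships the $n$ payload messages per source to the computed intermediate destinations (again $n=f|W|$ messages per source and per destination in $W$), taking $2$ more rounds and $4$ in total. Running this for all $\sqrt{n}$ subsets $W$ in parallel is harmless because, as noted right after \corollaryref{coro:2_round}, every such routing only uses edges with at least one endpoint in the active $W$ and the subsets are pairwise disjoint. The only real obstacle is the arithmetic bookkeeping needed to verify that both the helper traffic and the payload traffic are exactly $f|W|=n$ per node, which is the very reason the high-level strategy uses blocks of size $\sqrt{n}$.
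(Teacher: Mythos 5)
Your proposal is correct and follows essentially the same route as the paper: two rounds to make the per-node, per-$W'$ count matrix common knowledge inside each $W$ via \corollaryref{coro:2_round}, then two more rounds of \corollaryref{coro:2_round} to move the payload to deterministically computed intermediate destinations, run in parallel over the disjoint subsets $W$. The paper phrases this as ``analogous to \corollaryref{coro:4_round}'' exactly as you do, so no further comment is needed.
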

\begin{proof}
The proof is analogous to \corollaryref{coro:4_round}. First, each node in $W$
announces to each other node in $W$ the number of messages it holds for each set
$W'$. By \corollaryref{coro:2_round}, this step can be completed in $2$ rounds,
for all sets $W$ in parallel.

With this information, the nodes in $W$ can deterministically compute
(intermediate) destinations for each message in $W$ such that the resulting
distribution of messages meets the condition imposed by \stepref{line:high_3}.
Applying \corollaryref{coro:2_round} once more, this redistribution can be
performed in another $2$ rounds, again for all sets $W$ concurrently.
\end{proof}

Trivially, \stepref{line:high_4} can be executed in a single round by each
node in $W$ sending exactly one of the messages with destination in $W'$ it
holds to each node in $W'$. According to \corollaryref{coro:4_round},
\stepref{line:high_5} can be performed in $4$ rounds.

Regarding \stepref{line:high_2}, we follow similar ideas. \algref{algo:step_2}
breaks our approach to this step down into smaller pieces.
\begin{algorithm}[ht]
\caption{\stepref{line:high_2} of \algref{algo:high-level} in more
detail.}\label{algo:step_2}
Each subset $W$ computes, for each set $W'$, the number of messages its
  constituents hold in total for nodes in $W'$. The results are announced to all
  nodes.\nllabel{line:step_2_1}\\
All nodes locally compute a pattern according to which the messages are to be
  moved between the sets. It satisfies that from each set $W$ to each set $W'$,
  $n$ messages need to be sent, and that in the resulting configuration, each
  subset $W$ holds exactly $|W||W'|=n$ messages for each subset
  $W'$.\nllabel{line:step_2_2}\\
All nodes in subset $W$ announce to all other nodes in $W$ the number of
  messages they need to move to each set $W'$ according
  to the previous step.\nllabel{line:step_2_3}\\
All nodes in $W$ compute a pattern for moving messages within $W$ so that the
  resulting distribution permits to realize the exchange computed in Step~2 in a
  single round (i.e., each node in $W$ must hold exactly $|W'|=\sqrt{n}$
  messages with (intermediate) destinations in $W'$).\nllabel{line:step_2_4}\\
The redistribution within the sets according to Step~4 is
executed.\nllabel{line:step_2_5}\\
The redistribution among the sets computed in Step~2 is
executed.\nllabel{line:step_2_6}
\end{algorithm}

We now show that following the sequence given in \algref{algo:step_2},
\stepref{line:high_2} of \algref{algo:high-level} requires a constant number of
communication rounds only.
\begin{lemma}\label{lemma:step_2}
\stepref{line:high_2} of \algref{algo:high-level} can be implemented in $7$
rounds.
\end{lemma}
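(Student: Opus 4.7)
The plan is to bound each of the six substeps of \algref{algo:step_2} in turn, using \corollaryref{coro:2_round} as the main workhorse for every intra-subset routing and appealing to the paper's remark that this corollary can be invoked concurrently on pairwise-disjoint subsets $W$ without edge congestion.

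For \stepref{line:step_2_1}, each node $v\in W$ first locally partitions its $n$ messages by destination set to get $\sqrt n$ counts. Inside each $W$, I would designate the $k^{th}$ member of $W$ as the aggregator for the $k^{th}$ set $W'$; then in a single round each $v\in W$ sends each of its counts to the corresponding aggregator using at most one message per edge inside $W$, so that each aggregator computes the total $C_{W,W'}$. Broadcasting the $n$ values $C_{W,W'}$ to all of $V$ is another single round, since each of the $n$ aggregators holds exactly one value and each node of $V$ then receives $n$ messages from $n$ distinct senders. Hence \stepref{line:step_2_1} costs $2$ rounds.

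\stepref{line:step_2_2} is a purely local deterministic computation of an inter-set flow $f(W,W'',W')$ satisfying $\sum_{W''}f(W,W'',W')=C_{W,W'}$ and $\sum_{W}f(W,W'',W')=n$; its existence follows from the double-counting identity $\sum_{W'}C_{W,W'}=n|W|=\sum_{W}C_{W,W'}$, and because every node sees the same inputs every node deduces the same flow. \stepref{line:step_2_3} is an all-to-all announcement of $\sqrt n$ values within each $W$; with $|W|=\sqrt n$, $f=\sqrt n$, and $f|W|=n$ messages per node in either role, this matches the hypothesis of \corollaryref{coro:2_round} exactly (the sources and destinations are all-to-all inside $W$ and thus known a priori), and takes $2$ rounds concurrently over every $W$. \stepref{line:step_2_4} is again a purely local computation performed identically by every node of $W$ from the counts just disseminated.

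\stepref{line:step_2_5} is an intra-$W$ routing whose sources and destinations are determined by the plan from \stepref{line:step_2_4} and hence known to all of $W$; with $|W|=\sqrt n$ and $f|W|=n$ messages per node in either role, a second application of \corollaryref{coro:2_round} handles it in $2$ rounds, again in parallel for all $W$. Finally, \stepref{line:step_2_6} is a single round: each $v\in W$ now holds exactly $\sqrt n$ messages per intermediate set $W'$ and sends one per node of $W'$, so no edge carries more than one message per direction. Adding up yields $2+0+2+0+2+1=7$. The subtlety to check is the non-interference of the parallel invocations of \corollaryref{coro:2_round} across the $\sqrt n$ disjoint subsets and the bandwidth match for the broadcast in \stepref{line:step_2_1}; the former is exactly the remark following \corollaryref{coro:2_round}, and the latter reduces to counting that the $n$ broadcast values are held by $n$ distinct senders, so the parallelism burden is almost immediate once the six substeps are set up as above.
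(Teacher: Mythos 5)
Your decomposition, round counts, and use of \corollaryref{coro:2_round} match the paper's proof step for step ($2+0+2+0+2+1=7$, with the same aggregator trick for \stepref{line:step_2_1} and the same parallel, edge-disjoint invocations across the subsets $W$). The one place you under-argue is the existence of the balanced exchange patterns. For \stepref{line:step_2_2} you claim the flow $f(W,W'',W')$ with the three prescribed two-dimensional marginals ``follows from the double-counting identity''; the identity only shows that the matrix $C=(C_{W,W'})$ has all row and column sums equal to $n^{3/2}$, and one still needs to decompose $C$ into $\sqrt{n}$ nonnegative integer matrices each with all row and column sums $n$ --- general three-index transportation problems with consistent marginals need not be feasible. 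The paper gets this decomposition from \theoremref{theorem:koenig}, by edge-coloring the $n^{3/2}$-regular bipartite multigraph on the subsets with $n^{3/2}$ colors and bucketing colors modulo $\sqrt{n}$; the same theorem (applied to a degree-$n$ multigraph on $W$ and the intermediate sets) is what guarantees that your \stepref{line:step_2_4} plan --- exactly $\sqrt{n}$ messages per node per intermediate set --- exists, which you also treat as a black box. Neither claim is false, but both need König (or Birkhoff--von Neumann), not just counting. Finally, the paper flags a subtlety you pass over: the per-message assignment of \stepref{line:step_2_2} cannot be finalized until after \stepref{line:step_2_3}, since a node needs the other nodes' per-destination-set counts to locate its messages in the canonical global order; your reorganization (set-level flow first, per-message assignment folded into the local computation of \stepref{line:step_2_4}) handles this correctly provided the $\sqrt{n}$ values announced in \stepref{line:step_2_3} are each node's counts of held messages per destination set, which you should state explicitly.
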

\begin{proof}
We will show for each of the six steps of \algref{algo:step_2} that it can be
performed in a constant number of rounds and that the information available to
the nodes is sufficient to deterministically compute message exchange patterns
the involved nodes agree upon.

Clearly, \stepref{line:step_2_1} can be executed in two rounds. Each node in
$W$ simply sends the number of messages with destinations in the $i^{th}$ set
$W'$ it holds, where $i\in \{1,\ldots,\sqrt{n}\}$, to the $i^{th}$ node in $W$.
The $i^{th}$ node in $W$ sums up the received values and announces the result to
all nodes.

Regarding \stepref{line:step_2_2}, consider the following bipartite graph
$G=(S\dot{\cup} R,E)$. The sets $S$ and $R$ are of size $\sqrt{n}$ and represent
the subsets $W$ in their role as senders and receivers, respectively. For each
message held by a node in the $i^{th}$ set $W$ with destination in the $j^{th}$
set $W'$, we add an edge from $i\in S$ to $j\in R$. Note that after
\stepref{line:step_2_1}, each node can locally construct this graph. As each
node needs to send and receive $n$ messages, $G$ is of uniform degree $n^{3/2}$.
By \theoremref{theorem:koenig}, we can color the edge set of $G$ with $n^{3/2}$
colors so that no two edges of the same color share a node. We require that a
message of color $c\in \{1,\ldots,n^{3/2}\}$ is sent to the $(c
\operatorname{mod} \sqrt{n})^{th}$ set. Hence, the requirement that exactly $n$
messages need to be sent from any set $W$ to any set $W'$ is met. By requiring
that each node uses the same deterministic algorithm to color the edge set of
$G$, we make sure that the exchange patterns computed by the nodes agree.

Note that a subtlety here is that nodes cannot yet determine the precise color
of the messages they hold, as they do not know the numbers of messages to sets
$W'$ held by other nodes in $W$ and therefore also not the index of their
messages according to the global order of the messages. However, each node has
sufficient knowledge to compute the number of messages it holds with destination
in set $W'$ (for each $W'$), as this number is determined by the total numbers
of messages that need to be exchanged between each pair $W$ and $W'$ and the
node index only. This permits to perform \stepref{line:step_2_3} and then
complete \stepref{line:step_2_2} based on the received
information.\footnote{Formally, this can be seen as a deferred completition of
\stepref{line:step_2_2}.}

As observed before, \stepref{line:step_2_3} can be executed quickly: Each node
in $W$ needs to announce $\sqrt{n}$ numbers to all other nodes in $W$, which by
\corollaryref{coro:2_round} can be done in $2$ rounds. Now the nodes are capable
of computing the color of each of their messages according to the assignment
from \stepref{line:step_2_2}.

With the information gathered in \stepref{line:step_2_3}, it is now feasible to
perform \stepref{line:step_2_4}. This can be seen by applying
\theoremref{theorem:koenig} again, for each set $W$ to the bipartite multigraph
$G=(W\dot{\cup}R,E)$, where $R$ represents the $\sqrt{n}$ subsets $W'$ in their
receiving role with respect to the pattern computed in \stepref{line:step_2_2},
and each edge corresponds to a message held by a node in $W$ with destination in
some $W'$. The nodes can locally compute this graph due to the information they
received in Steps~\ref{line:step_2_2} and~\ref{line:step_2_3}. As $G$ has degree
$n$, we obtain an edge-coloring with $n$ colors. Each node in $W$ will move a
message of color $i\in \{1,\ldots,n\}$ to the $(i \operatorname{mod}
\sqrt{n})^{th}$ node in $W$, implying that each node will receive for each $W'$
exactly $\sqrt{n}$ messages with destination in $W'$.

Since the exchange pattern computed in \stepref{line:step_2_4} is, for each $W$,
known to all nodes in $W$, by \corollaryref{coro:2_round} we can perform
\stepref{line:step_2_5} for all sets in parallel in $2$ rounds. Finally,
\stepref{line:step_2_6} requires a single round only, since we achieved that
each node holds for each $W'$ exactly $\sqrt{n}$ messages with destination in
$W'$ (according to the pattern computed in \stepref{line:step_2_2}), and thus
can send exactly one of them to each of the nodes in $W'$ directly.

Summing up the number of rounds required for each of the steps, we see that
$2+0+2+0+2+1=7$ rounds are required in total, completing the proof.
\end{proof}

Overall, we have shown that each step of \algref{algo:high-level} can be
executed in a constant number of rounds if $\sqrt{n}$ is integer. It is not hard
to generalize this result to arbitrary values of $n$ without incurring larger
running times.
\begin{theorem}\label{theorem:idt}
\probref{prob:idt} can be solved deterministically within $16$ rounds.
\end{theorem}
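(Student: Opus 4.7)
The plan is to combine the round counts already established for the five steps of \algref{algo:high-level} (covering the case $\sqrt{n}\in\N$), and then dispose of the integrality assumption via a padding argument.

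For the case that $\sqrt{n}$ is an integer, I would simply accumulate the bounds proved in the preceding corollaries and lemma. \stepref{line:high_2} requires no communication (it is a local partitioning rule depending only on node identifiers). \stepref{line:high_2} costs $7$ rounds by \lemmaref{lemma:step_2}. \stepref{line:high_3} costs $4$ rounds by \corollaryref{coro:step_3}. \stepref{line:high_4} costs exactly one round: after \stepref{line:high_3}, each node in any $W$ holds exactly $\sqrt{n}=|W'|$ messages with destinations in $W'$, so direct delivery places at most one message on each edge in each direction, and this can be performed in parallel over all pairs $(W,W')$ since the $W$'s are disjoint. Finally, \stepref{line:high_5} costs $4$ rounds by \corollaryref{coro:4_round} applied concurrently to every $W$ (the disjointness of the $W$'s combined with the remark following \corollaryref{coro:2_round} guarantees that no edge is overloaded across parallel invocations). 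Summing: $0+7+4+1+4=16$.

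To handle arbitrary $n$, I would simulate the algorithm on a padded instance with $N:=\lceil\sqrt{n}\rceil^2$, so that $n\leq N<(\sqrt{n}+1)^2<4n$ and $\sqrt{N}$ is an integer. Introduce $N-n$ virtual nodes, mapped to real nodes so that each real node simulates at most $\lceil N/n\rceil=\BO(1)$ virtual nodes. Extend the input by dummy messages so that every (virtual or real) node is source and destination of exactly $N$ messages, choosing destinations to keep $|\mathcal{R}_k|=N$; this is easily arranged locally from the known value of $N$. A round of the padded algorithm requires $\BO(\log N)=\BO(\log n)$ bits to be exchanged between each ordered pair of virtual nodes, hence $\BO(1)\cdot\BO(\log n)=\BO(\log n)$ bits between each pair of real nodes, which fits in a single round of the real model. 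Thus the $16$-round bound transfers verbatim to the real system.

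The only genuinely substantive point here is the one-round implementation of \stepref{line:high_4}, which depends on the invariant established by \stepref{line:high_3}; the remainder is accounting and a routine simulation. The main thing to watch in the padding argument is not to inflate the number of rounds, which is why it is crucial that each real node hosts only $\BO(1)$ virtual nodes and that the constant hidden in the $\BO(\log n)$ message size absorbs the simulation overhead.
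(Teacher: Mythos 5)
Your accounting for the case $\sqrt{n}\in\N$ is exactly the paper's: $0+7+4+1+4=16$ via \lemmaref{lemma:step_2}, \corollaryref{coro:step_3}, the one-round direct delivery for \stepref{line:high_4}, and \corollaryref{coro:4_round}. (Minor slip: your first reference to ``\stepref{line:high_2} requires no communication'' should point to the unlabeled partitioning step, not to \stepref{line:high_2}, which is the step that costs $7$ rounds.) For non-integer $\sqrt{n}$ you diverge from the paper: you pad to $N=\lceil\sqrt{n}\rceil^2$ virtual nodes and simulate, whereas the paper decomposes the message set over three overlapping node sets $V_1,V_2,V_3$ of size $\lfloor\sqrt{n}\rfloor^2$, runs the integer-case algorithm on the $V_1$- and $V_2$-internal messages in parallel, and delivers the residual ``cross'' messages by a separate $6$-round procedure. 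Your simulation is sound: $N<n+2\sqrt{n}+1$, so each real node hosts at most $2$ virtual nodes, each real edge carries at most $4$ virtual messages per round, and the constant-factor blowup in message size is absorbed by the $\BO(\log n)$ bound, so the $16$-round count transfers round-for-round. The one point you gloss over is arranging that every virtual node is the \emph{destination} of exactly $N$ messages: this is not a purely local condition, since no source knows how many messages the other sources hold for a given destination. It is rescued by the paper's explicit remark after \probref{prob:idt} that the exact-count requirement may be relaxed to ``at most $n$'' without changing the techniques; assigning all real messages for node $k$ to a fixed virtual copy of $k$ and padding only the per-source counts then suffices, and you should say so. On balance your padding argument is more generic and reusable (it would work for any round-by-round algorithm on $K_N$), while the paper's decomposition avoids virtual nodes entirely and stays within the original node set at the cost of an ad hoc treatment of the cross messages; both yield $16$ rounds with a constant-factor increase in message size.
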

\begin{proof}
If $\sqrt{n}$ is integer, the result immediately follows from
\lemmaref{lemma:step_2}, \corollaryref{coro:step_3}, and
\corollaryref{coro:4_round}, taking into account that the fourth step of the
high-level strategy requires one round.

If $\sqrt{n}$ is not integer, consider the following three sets of nodes:
\begin{eqnarray*}
V_1&:=&\{1,\ldots,\lfloor \sqrt{n}\rfloor^2\},\\
V_2&:=&\{n-\lfloor \sqrt{n}\rfloor^2+1,\ldots,n\}, \text{ and}\\
V_3&:=&\{1,\ldots,n-\lfloor
\sqrt{n}\rfloor^2\}\cup \{\lfloor \sqrt{n}\rfloor^2+1,\ldots,n\}.
\end{eqnarray*}
$V_1$ and
$V_2$ satisfy that $|V_1|=|V_2|=\lfloor \sqrt{n}\rfloor^2$. Hence, we can apply
the result for an integer root to the subsets of messages for which either both
sender and receiver are in $V_1$ or, symmetrically, in $V_2$. Doing so in
parallel will increase the message size by a factor of at most $2$. Note that
for messages where sender and receiver are in $V_1\cap V_2$ we can simply delete
them from the input of one of the two instances of the algorithm that run
concurrently, and adding empty ``dummy'' messages, we see that it is irrelevant
that nodes may send or receive less than $n$ messages in the individual
instances.

Regarding $V_3$, denote for each node $i\in V_3$ by $S_i\subseteq {\cal S}_i$
the subset of messages for which $i$ and the respective receiver are neither
both in $V_1$ nor both in $V_2$. In other words, for each message in $S_i$
either $i\in V_1\cap V_3$ and the receiver is in $V_2\cap V_3$ or vice versa.
Each node $i\in V_3$ moves the $j^{th}$ message in $S_i$ to node $j$ (one
round). No node will receive more than $|V_2\cap V_3|=|V_1\cap V_3|$ messages
with destinations in $V_1\cap V_3$, as there are no more than this number of
nodes sending such messages. Likewise, at most $|V_2\cap V_3|$ messages for
nodes in $V_2\cap V_3$ are received. Hence, in the subsequent round, all nodes
can move the messages they received for nodes in $V_1\cap V_3$ to nodes in
$V_1\cap V_3$, and the ones received for nodes in $V_2\cap V_3$ to nodes in
$V_2\cap V_3$ (one round). Finally, we apply \corollaryref{coro:4_round} to each
of the two sets to see that the messages $\bigcup_{i\in V_3}S_i$ can be
delivered within $4$ rounds. Overall, this procedure requires $6$ rounds, and
running it in parallel with the two instances dealing with other messages will
not increase message size beyond $\BO(\log n)$. The statement of the theorem
follows.
\end{proof}

\section{Sorting}\label{sec:sort}

In this section, we present a deterministic sorting algorithm. The problem
formulation is essentially equivalent to the one in~\cite{patt-shamir11}.
\begin{problem}[Sorting]\label{prob:sort}
Each node is given $n$ keys of size $\BO(\log n)$ (i.e., a key fits into a
message). We assume w.l.o.g.\ that all keys are different.\footnote{Otherwise
we order the keys lexicographically by key, node whose input contains the key,
and a local enumeration of identical keys at each node.} Node $i$ needs to
learn the keys with indices $i(n-1)+1,\ldots,in$ according the total order of
all keys.
\end{problem}

\subsection{Sorting Fewer Keys with Fewer Nodes}
Again, we assume for simplicity that $\sqrt{n}$ is integer and deal with the
general case later on. Our algorithm will utilize a subroutine that can sort up
to $2n^{3/2}$ keys within a subset $W\subset V$ of $\sqrt{n}$ nodes,
communicating along edges with at least one endpoint in the respective subset of
nodes. The latter condition ensures that we can run the routine in parallel for
disjoint subsets $W$. We assume that each of the nodes in $W$ initially holds
$2n$ keys. The pseudocode of our approach is given in \algref{algo:sub_sort}.
\begin{algorithm}[ht]
\caption{Sorting $2n^{3/2}$ keys with $|W|=\sqrt{n}$ nodes. Each node in $W$
has $2n$ input keys and learns their indices in the total order of all
$2n^{3/2}$ keys.}\label{algo:sub_sort}
Each node in $W$ locally sorts its keys and selects every
  $(2\sqrt{n})^{th}$ key according to this order (i.e., a key of local index $i$
  is selected if $i \operatorname{mod} 2\sqrt{n} = 0$).\nllabel{line:sub_1}\\
Each node in $W$ announces the selected keys to all other nodes in $W$.
\nllabel{line:sub_2}\\
Each node in $W$ locally sorts the union of the received keys and
  selects every $\sqrt{n}^{th}$ key according to this order. We call such a key
  \emph{delimiter}.\nllabel{line:sub_3}\\
Each node $i\in W$ splits its original input into $\sqrt{n}$ subsets,
  where the $j^{th}$ subset $K_{i,j}$ contains all keys that are larger than the
  $(j-1)^{th}$ delimiter (for $j=1$ this condition does not apply) and smaller
  or equal to the $j^{th}$ delimiter.\nllabel{line:sub_4}\\
Each node $i\in W$ announces for each $j$ $|K_{i,j}|$ to all nodes in
  $W$.\nllabel{line:sub_5}\\
Each node $i\in W$ sends $K_{i,j}$ to the $j^{th}$ node in
  $W$.\nllabel{line:sub_6}\\
Each node in $W$ locally sorts the received keys. The sorted sequence now
consists of the concatenation of the sorted sequences in the order of the node
identifiers.\nllabel{line:sub_7}\\
Keys are redistributed such that each node receives $2n$
keys and the order is maintained.\nllabel{line:sub_8}
\end{algorithm}

Let us start out with the correctness of the proposed scheme.
\begin{lemma}\label{lemma:few_keys_correct}
When executing \algref{algo:sub_sort}, the nodes in $W$ are indeed capable of
computing their input keys' indices in the order on the union of the input keys
of the nodes in $W$.
\end{lemma}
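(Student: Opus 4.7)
The plan is to verify three things: (i) after \stepref{line:sub_3}, all nodes in $W$ agree on the same set of $\sqrt{n}$ delimiters; (ii) after \stepref{line:sub_6}, the $j$-th node in $W$ holds exactly the set of input keys that lie in the interval determined by the $(j{-}1)$-th and $j$-th delimiters, so that concatenating the locally sorted sequences in the order of node identifier yields the global sorted order; and (iii) from the information communicated in \stepref{line:sub_5} together with the local order obtained in \stepref{line:sub_7}, each node can determine the global index of every key it holds after \stepref{line:sub_6} and, in particular, can report these indices back to the originating nodes.

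For (i), observe that \stepref{line:sub_2} makes the same multiset of selected keys available to every node in $W$; applying identical deterministic sorting and sub-sampling rules then produces the same $\sqrt{n}$ delimiters at each node. For (ii), since the delimiters are common knowledge, every node $i$ unambiguously partitions its input into the subsets $K_{i,1},\ldots,K_{i,\sqrt{n}}$ defined by the intervals in \stepref{line:sub_4}. In \stepref{line:sub_6} each $K_{i,j}$ is delivered to the $j$-th node of $W$, so that node ends up holding precisely $\bigcup_i K_{i,j}$, i.e., every input key that is strictly greater than the $(j{-}1)$-th delimiter and at most the $j$-th delimiter. Because these intervals are pairwise disjoint and linearly ordered, the concatenation across $j$ of the locally sorted sequences produced in \stepref{line:sub_7} is precisely the global sorted order on $\bigcup_{i\in W}{\cal S}_i$, establishing correctness of the sort.

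For (iii), \stepref{line:sub_5} informs every node in $W$ of $|K_{i,j}|$ for all pairs $(i,j)$. Summing these sizes over all sources $i$ and all $j' < j$ yields the global rank at which bucket $j$ begins; the local rank from \stepref{line:sub_7} then pins down the global rank of each individual key. Since each key can be tagged with the identifier of its original source, the $j$-th node can return the global index of every received key to that source, so every node ultimately learns the indices of its input keys. No real obstacle is anticipated here: once the delimiters are agreed upon, the rest is bookkeeping. The subtler quantitative claims --- that bucket sizes $|\bigcup_i K_{i,j}|$ remain small enough for \stepref{line:sub_6} to fit within the bandwidth budget, and that \stepref{line:sub_8} is feasible in constantly many rounds --- concern efficiency rather than correctness and will presumably be handled in later lemmas.
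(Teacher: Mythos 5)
Your proof is correct and follows essentially the same route as the paper's: agreement on the delimiters, disjointness and linear ordering of the buckets $K_{i,j}$, and the observation that concatenating the locally sorted buckets yields the global order. Your part (iii) merely spells out the index bookkeeping that the paper leaves implicit in ``correctness follows,'' so there is nothing substantively different.
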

\begin{proof}
Observe that because all nodes use the same input in \stepref{line:sub_3}, they
compute the same set of delimiters. The set of all keys is the union
$\bigcup_{j=1}^{\sqrt{n}}\bigcup_{i\in W}K_{i,j}$, and the sets $K_{i,j}$ are
disjoint. As the $K_{i,j}$ are defined by comparison with the delimiters, we
know that all keys in $K_{i,j}$ are larger than keys in $K_{i',j'}$ for all
$i'\in W$ and $j'<j$, and smaller than keys in $K_{i',j'}$ for all $i'\in W$ and
$j'>j$. Since in \stepref{line:sub_7} the received keys are locally sorted and
\stepref{line:sub_8} maintains the resulting order, correctness follows.
\end{proof}

Before turning to the running time of the algorithm, we show that the
partitioning of the keys by the delimiters is well-balanced.
\begin{lemma}\label{lemma:balanced}
When executing \algref{algo:sub_sort}, for each $j\in \{1,\ldots,\sqrt{n}\}$ it
holds that
\begin{equation*}
\left|\bigcup_{i\in W}K_{i,j}\right|< 4n.
\end{equation*}
\end{lemma}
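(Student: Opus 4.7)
The plan is to fix a bucket index $j \in \{1,\ldots,\sqrt{n}\}$ and count, for each node $i \in W$ separately, how many of its $2n$ input keys can land in $K_{i,j}$, and then sum these bounds over the nodes of $W$. The central counting step should be to relate $|K_{i,j}|$ to the number $s_i$ of samples (from \stepref{line:sub_1}) that node $i$ contributed to the interval $(d_{j-1},d_j]$, where $d_0,\ldots,d_{\sqrt{n}}$ denote the delimiters. Here I would use the convention $d_0=-\infty$, and observe that since every node samples its locally largest key (local position $2n$ is a multiple of $2\sqrt{n}$), the global maximum appears in the sampled set $S$ and hence equals $d_{\sqrt{n}}$; this eliminates the annoying ``above the last delimiter'' corner case, so that indeed $\bigcup_j K_{i,j}$ covers all input keys.

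Next I would carry out the per-node count. Fix $i$ and write $k_1<\cdots<k_{2n}$ for $i$'s sorted input; by construction its samples sit at positions $2\sqrt{n},4\sqrt{n},\ldots,2n$. If $s_i\ge 1$ of these samples fall in $(d_{j-1},d_j]$, they must occupy consecutive sampled positions $2a\sqrt{n},\ldots,2(a+s_i-1)\sqrt{n}$, so every key of $i$ in $(d_{j-1},d_j]$ has local index strictly between the previous and next sampled position (bracketing the block). Between two consecutive samples there are $2\sqrt{n}-1$ unselected keys, and counting the samples themselves together with the up to $2\sqrt{n}-1$ unselected keys on each side of the block and the $(s_i-1)(2\sqrt{n}-1)$ keys in the interior gaps, one obtains
\[
|K_{i,j}|\;\le\; s_i+(s_i+1)(2\sqrt{n}-1)\;=\;2\sqrt{n}(s_i+1)-1.
\]
The case $s_i=0$ fits the same formula, since then all of $i$'s keys in the interval lie in a single gap of length $\le 2\sqrt{n}-1$ between two samples outside $(d_{j-1},d_j]$ (using the convention on $d_0$ and the observation on $d_{\sqrt{n}}$ for the boundary buckets).

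Finally I would sum over $i\in W$. Because the delimiters were obtained by taking every $\sqrt{n}$-th element of the sorted sample set $S$ of size $n$, the interval $(d_{j-1},d_j]$ contains exactly $\sqrt{n}$ samples, i.e.\ $\sum_{i\in W}s_i=\sqrt{n}$. Combined with $|W|=\sqrt{n}$, this yields
\[
\left|\bigcup_{i\in W}K_{i,j}\right|\;\le\;\sum_{i\in W}\bigl(2\sqrt{n}(s_i+1)-1\bigr)\;=\;2\sqrt{n}\cdot\sqrt{n}+2\sqrt{n}\cdot\sqrt{n}-\sqrt{n}\;=\;4n-\sqrt{n}\;<\;4n,
\]
as claimed.

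The main obstacle, in my view, is not algebraic but bookkeeping: making precise the bound $2\sqrt{n}(s_i+1)-1$ for a generic node, and in particular handling the two boundary buckets ($j=1$ and $j=\sqrt{n}$) without having to carve out separate cases. The observation that the per-node sampling pattern forces the global maximum to be a delimiter is what makes the top boundary work cleanly; symmetrically, extending with $d_0=-\infty$ handles the bottom boundary and lets one use a single formula throughout.
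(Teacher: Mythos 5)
Your proof is correct and follows essentially the same route as the paper's: bound $|K_{i,j}|$ per node by (roughly) $2\sqrt{n}$ times one plus the number of that node's samples falling in the bucket, then sum over $W$ using that each bucket contains exactly $\sqrt{n}$ samples. Your version is just slightly more meticulous about the boundary buckets and carries an extra $-\sqrt{n}$ that the paper's coarser inequality $|K_{i,j}|<2\sqrt{n}(d_i+1)$ discards.
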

\begin{proof}
Due to the choice of the delimiters, $\bigcup_{i\in W}K_{i,j}$ contains
exactly $\sqrt{n}$ of the keys selected in \stepref{line:sub_1} of the
algorithm. Denote by $d_i$ the number of such selected keys in $K_{i,j}$. As in
\stepref{line:sub_1} each node selects every $(2\sqrt{n})^{th}$ of its keys and
the set $K_{i,j}$ is a contiguous subset of the ordered sequence of input keys
at $w$, we have that $|K_{i,j}|<2\sqrt{n}(d_i+1)$. It follows that
\begin{equation*}
\left|\bigcup_{i\in W}K_{i,j}\right|=\sum_{i\in W}|K_{i,j}|
<2\sqrt{n}\sum_{i\in W}(d_i+1)=2\sqrt{n}(\sqrt{n}+|W|)=4n.\qedhere
\end{equation*}
\end{proof}

We are now in the position to complete our analysis of the subroutine.
\begin{lemma}\label{lemma:few_keys}
Given a subset $W\subseteq V$ of size $\sqrt{n}$ such that each $w\in W$ holds
$2n$ keys, each node in $W$ can learn about the indices of its keys in the total
order of all keys held by nodes in $W$ within $10$ rounds. Furthermore, only
edges with at least one endpoint in $W$ are used for this purpose.
\end{lemma}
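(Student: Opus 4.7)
The plan is to walk through the eight steps of \algref{algo:sub_sort} and bound the round cost of each, showing that they sum to at most $10$. Steps \ref{line:sub_1}, \ref{line:sub_3}, \ref{line:sub_4}, and \ref{line:sub_7} are entirely local and contribute $0$ rounds: local sorting and selection in \stepref{line:sub_1}; identical delimiter computations in \stepref{line:sub_3}, possible because \stepref{line:sub_2} leaves every node in $W$ with the same union of selected keys; partitioning of each node's input by the shared delimiters in \stepref{line:sub_4}; and local sorting in \stepref{line:sub_7}.

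Steps \ref{line:sub_2} and \ref{line:sub_5} are each an all-to-all broadcast inside $W$ of $\sqrt{n}$ values per sender---at most $n$ messages per sender and per receiver, with sources and destinations known a priori. I plan to invoke \corollaryref{coro:2_round} directly on the set $W$, so that each of these steps costs $2$ rounds, for $4$ rounds total. In particular, the ``knowledge of sources and destinations'' prerequisite is trivially met because every node in $W$ computes the same exchange pattern from the predetermined algorithm description.

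The real work is in \stepref{line:sub_6} and \stepref{line:sub_8}, which together must fit in the remaining $6$ rounds. For \stepref{line:sub_6}, after \stepref{line:sub_5} every node knows all the $|K_{i,j}|$'s and hence the entire routing demand: sender $i$ ships exactly $2n$ messages and, by \lemmaref{lemma:balanced}, receiver $j$ gets strictly fewer than $4n$. The strategy is to build the bipartite multigraph on $W\dot{\cup}W$ with one edge per message, pad it to be regular at the maximum degree, and apply \theoremref{theorem:koenig} to obtain matchings; these are used as color classes in a two-hop routing via the $n$ available relay nodes (with several color classes assigned to each relay), and I plan to pipeline sending and forwarding tightly enough that no edge is ever overloaded. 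For \stepref{line:sub_8} the chunk sizes $L_j=\sum_i|K_{i,j}|<4n$ and thus the chunk-to-target alignment are common knowledge after \stepref{line:sub_7}, so I can arrange for each source $j$ to keep the (up to $2n$) keys already lying in its own target range, cutting outgoing traffic to less than $2n$ per source while the incoming load is exactly $2n$ per destination; the same Koenig-plus-pipelining routing then delivers the remaining messages within the round budget.

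The obstacle I expect to spend the most time on is the pipelining analysis for \stepref{line:sub_6}: because the destination load may approach $4n$ while each destination has only $n$ incoming edges, the scheme has to saturate essentially every available edge-round, which forces a careful accounting of when each color class is sent to its relay and when it is forwarded on to its destination. Crucially, the routing schedule depends only on quantities already known to every node in $W$ after \stepref{line:sub_5}, so no extra rounds are needed for coordination, and by construction only edges with at least one endpoint in $W$ are used, which is exactly what is needed for running the subroutine in parallel over disjoint subsets in the main sorting algorithm.
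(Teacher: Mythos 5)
The high-level accounting (local steps free, Steps~\ref{line:sub_2} and~\ref{line:sub_5} via \corollaryref{coro:2_round} at $2$ rounds each) matches the paper, but your treatment of Steps~\ref{line:sub_6} and~\ref{line:sub_8} has a genuine gap: you are missing the bundling trick, and without it your round budget cannot be met. The paper's proof simply packs a constant number ($4$) of keys into each $\BO(\log n)$-bit message, which reduces the per-node load in \stepref{line:sub_6} from ``send $2n$ / receive $<4n$ keys'' to at most $n$ \emph{messages} sent and received, so that \corollaryref{coro:4_round} applies verbatim and gives $4$ rounds; likewise \stepref{line:sub_8} becomes a known-pattern instance of \corollaryref{coro:2_round} at $2$ rounds. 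You instead treat each key as its own message and propose a bespoke Koenig-plus-pipelining two-hop schedule. At one key per edge per round this cannot work: a destination in \stepref{line:sub_6} must receive up to $4n$ keys over its $n$ incident edges, which already forces at least $4$ rounds of receiving, and since almost all keys must transit a relay, the first useful receiving round is round $2$, so \stepref{line:sub_6} alone needs at least $5$ rounds. A similar count gives at least $3$ rounds for \stepref{line:sub_8} (the claim that keeping keys in one's own target range cuts outgoing traffic below $2n$ is also false in the worst case --- a node's sorted chunk can be entirely disjoint from its target window, so it may have to ship nearly $4n$ keys). That is at least $8$ rounds for the two steps against the $6$ you have left, so the ``careful pipelining'' you defer is not a technicality but an impossibility at this message granularity.

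The fix is exactly the step you skipped: since keys have size $\BO(\log n)$ and messages may carry a constant number of polynomially bounded integers, bundle four keys per message. Then every node sends and receives at most $n$ messages in \stepref{line:sub_6}, the exact-load requirement of the corollaries is met after padding with dummies, and the routing machinery you were about to rebuild by hand is already provided by Corollaries~\ref{coro:2_round} and~\ref{coro:4_round}, yielding the $0+2+0+0+2+4+2=10$ count. Your observation that all schedules are computable from information common to $W$ after \stepref{line:sub_5}, and that only edges touching $W$ are used, is correct and is the same justification the paper gives for parallel execution over disjoint subsets.
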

\begin{proof}
By \lemmaref{lemma:few_keys_correct}, \algref{algo:sub_sort} is correct. Hence,
it remains to show that it can be implemented with $10$ rounds of communication,
using no edges with both endpoints outside $W$.

Steps~\ref{line:sub_1}, \ref{line:sub_3}, \ref{line:sub_4}, and~\ref{line:sub_7}
involve local computations only. Since $|W|=\sqrt{n}$ and each node selects
exactly $\sqrt{n}$ keys it needs to announce to all other nodes, according to
\corollaryref{coro:2_round} \stepref{line:sub_2} can be performed in $2$ rounds.
The same holds true for \stepref{line:sub_5}, as again each node needs to
announce $|W|=\sqrt{n}$ values to each other node in $W$. In
\stepref{line:sub_6}, each node sends its $2n$ input keys and, by
\lemmaref{lemma:balanced}, receives at most $4n$ keys. By bundling a constant
number of keys in each message, nodes need to send and receive at most
$n=|W|\cdot n/|W|$ messages. Hence, \corollaryref{coro:4_round} states that this
step can be completed in $4$ rounds. Regarding \stepref{line:sub_8}, observe
that due to \stepref{line:sub_5} each node knows how many keys each other node
holds at the beginning of the step. Again bundling a constant number of keys
into each message, we thus can apply \corollaryref{coro:2_round} to complete
\stepref{line:sub_8} in $2$ rounds. In total, we thus require $0+2+0+0+2+4+2=10$
communication rounds.

As we invoked Corollaries~\ref{coro:2_round} and~\ref{coro:4_round} in order to
define the communication pattern, it immediately follows from the
corollaries that all communication is on edges with at least one endpoint
in~$W$.
\end{proof}

\subsection{Sorting All Keys}
With this subroutine at hand, we can move on to \probref{prob:sort}. Our
solution follows the same pattern as \algref{algo:sub_sort}, where the
subroutine in combination with \theoremref{theorem:idt} enables that sets of
size $\sqrt{n}$ can take over the function nodes had in \algref{algo:sub_sort}.
This increases the processing power by factor $\sqrt{n}$, which is sufficient to
deal with all $n^2$ keys. \algref{algo:sort} shows the high-level structure of
our solution.

\begin{algorithm}[ht]
\caption{Solving \probref{prob:sort}.}\label{algo:sort}
Each node locally sorts its input and selects every $\sqrt{n}^{th}$ key
  (i.e., the index in the local order modulo $\sqrt{n}$ equals
  $0$).\nllabel{line:sort_1}\\
Each node transmits its $i^{th}$ selected key to node
  $i$.\nllabel{line:sort_2}\\
Using \algref{algo:sub_sort}, nodes $1,\ldots,\sqrt{n}$ sort the in total
  $n^{3/2}$ keys they received (i.e., determine the respective indices in the
  induced order).\nllabel{line:sort_3}\\
Out of the sorted subsequence, every $n^{th}$ key is selected as
  \emph{delimiter} and announced to all nodes (i.e., there is a total of
  $\sqrt{n}$ delimiters).\nllabel{line:sort_4}\\
Each node $i\in V$ splits its original input into $\sqrt{n}$ subsets,
  where the $j^{th}$ subset $K_{i,j}$ contains all keys that are larger than the
  $(j-1)^{th}$ delimiter (for $j=1$ this condition does not apply) and smaller
  or equal to the $j^{th}$ delimiter.\nllabel{line:sort_5}\\
The nodes are partitioned into $\sqrt{n}$ disjoint sets $W$ of size
  $\sqrt{n}$. Each node $i\in V$ sends $K_{i,j}$ to the $j^{th}$ set $W$ (i.e.,
  each node in $W$ receives either $\lfloor|K_{i,j}|/|W|\rfloor$ or
  $\lceil|K_{i,j}|/|W|\rceil$ keys, and each key is sent to exactly one
  node).\nllabel{line:sort_6}\\
Using \algref{algo:sub_sort}, the sets $W$ sort the received keys.\nllabel{line:sort_7}\\
Keys are redistributed such that each node receives
$n$ keys and the order is maintained.\nllabel{line:sort_8}
\end{algorithm}

The techniques and results from the previous sections are sufficient to derive
our second main theorem without further delay.
\begin{theorem}\label{theorem:sort}
\probref{prob:sort} can be solved in $37$ rounds.
\end{theorem}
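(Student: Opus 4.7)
The plan is to prove \theoremref{theorem:sort} by analyzing \algref{algo:sort} step by step, treating it as a ``scaled-up'' version of \algref{algo:sub_sort}: a set of $\sqrt{n}$ nodes plays the role that a single node played in the subroutine, which multiplies available bandwidth and memory by $\sqrt{n}$ and thus suffices to handle all $n^2$ input keys. Correctness will follow essentially as in \lemmaref{lemma:few_keys_correct} once every node agrees on the delimiters chosen in \stepref{line:sort_4}; the running-time bound will come from handling each of the eight steps via \lemmaref{lemma:few_keys}, \theoremref{theorem:idt}, and \corollaryref{coro:2_round}.

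For correctness, after \stepref{line:sort_3} nodes $1,\ldots,\sqrt{n}$ have computed the global order of the $n^{3/2}$ sampled keys by \lemmaref{lemma:few_keys} (padding to $2n$ keys per node with dummy $+\infty$ entries if necessary). \stepref{line:sort_4} then selects every $n^{th}$ sampled key as a delimiter and broadcasts the $\sqrt{n}$ delimiters to all nodes, so every node partitions its input in \stepref{line:sort_5} using the same delimiters. Consequently the bucket $\bigcup_i K_{i,j}$ is exactly the range of keys lying between consecutive delimiters; sorting each such bucket within $W_j$ in \stepref{line:sort_7} yields the sorted subsequence for that range, and concatenating across $j$ followed by the order-preserving \stepref{line:sort_8} yields the full sorted sequence. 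I also need a balance lemma in the spirit of \lemmaref{lemma:balanced}: since each source sampled every $\sqrt{n}^{th}$ of $n$ keys and each bucket contains exactly $n$ sampled keys, a contiguity argument gives $|\bigcup_{i\in V}K_{i,j}|<2n^{3/2}$, so each of the $\sqrt{n}$ nodes in $W_j$ receives at most $2n$ keys in \stepref{line:sort_6}, matching the hypothesis of \lemmaref{lemma:few_keys} needed for \stepref{line:sort_7}.

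For the round count, Steps~\ref{line:sort_1} and~\ref{line:sort_5} are local and free; \stepref{line:sort_2} takes one round (each node sends $\sqrt{n}$ messages and each of nodes $1,\ldots,\sqrt{n}$ receives $n$); \stepref{line:sort_3} costs $10$ rounds by \lemmaref{lemma:few_keys}; \stepref{line:sort_4} is a broadcast of $\sqrt{n}$ delimiters from $\sqrt{n}$ holders to all $n$ nodes, which fits in a single round because each source sends $\BO(1)$ delimiters to each of $n$ destinations and each destination receives at most $\sqrt{n}$ delimiters in total. \stepref{line:sort_7} takes another $10$ rounds by applying \lemmaref{lemma:few_keys} in parallel to the $\sqrt{n}$ sets $W$, which is possible because that lemma uses only edges with at least one endpoint in each respective set. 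The two routing tasks \stepref{line:sort_6} and \stepref{line:sort_8} I plan to reduce to instances of \probref{prob:idt} after short coordination phases, each solvable with at most a constant number of invocations of \theoremref{theorem:idt}. Summing these contributions together with the constant-overhead preprocessing for Steps~\ref{line:sort_6} and~\ref{line:sort_8} should give the stated $37$ rounds.

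The main obstacle I expect is \stepref{line:sort_6}, where each source sends exactly $n$ keys but a set $W_j$ may collectively receive up to $2n^{3/2}$, so individual nodes in $W_j$ would receive up to $2n$ keys rather than the $n$ assumed in \probref{prob:idt}. My plan is to use \corollaryref{coro:2_round} within each set $W_j$ so that its nodes coordinate on explicit, load-balanced destinations for every source's $K_{i,j}$ — the counts $|K_{i,j}|$ being first gathered within $W_j$ in $\BO(1)$ rounds — and then to split the resulting traffic into a constant number of IDT sub-instances so that the ``$n$ messages per node'' hypothesis of \probref{prob:idt} is respected. For non-integer $\sqrt{n}$, the same overlapping-subsets trick used at the end of the proof of \theoremref{theorem:idt} applies, incurring only a constant additive overhead and a constant factor blow-up in message size.
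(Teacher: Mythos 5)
Your high-level plan matches the paper's: treat \algref{algo:sort} as a scaled-up \algref{algo:sub_sort}, prove correctness as in \lemmaref{lemma:few_keys_correct}, and bound the buckets by $\left|\bigcup_{i\in V}K_{i,j}\right|<2n^{3/2}$ exactly as the paper does. The correctness and balance parts of your argument are fine. The gap is in the round accounting: your budget does not come close to $37$. Even charging only one invocation of \theoremref{theorem:idt} for each of Steps~\ref{line:sort_6} and~\ref{line:sort_8} and zero coordination overhead, you already have $0+1+10+1+0+16+10+16=54$ rounds, and your actual plan (``a constant number of IDT sub-instances'' per step, plus gathering the counts $|K_{i,j}|$ inside each $W_j$ first) costs strictly more. ``Should give the stated $37$ rounds'' is asserted, not derived.

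The three ideas you are missing are exactly where the paper saves the difference. First, for \stepref{line:sort_6} the overload problem you identify (receivers get up to $2n$ keys) is resolved not by splitting into several IDT instances but by bundling two keys per message, after which each node sends and receives at most $n$ messages and a \emph{single} $16$-round call to \theoremref{theorem:idt} suffices. Second, for \stepref{line:sort_8} no IDT call is needed at all: after \stepref{line:sub_5} of the subroutine every node already knows how many keys every other node holds, so sources and destinations of the redistribution are common knowledge and \corollaryref{coro:2_round} finishes it in $2$ rounds. Third, in Steps~\ref{line:sort_3} and~\ref{line:sort_7} the final redistribution step of \algref{algo:sub_sort} can be skipped (one proceeds directly to delimiter selection, respectively to \stepref{line:sort_8}), reducing each invocation from $10$ to $8$ rounds; together with $2$ rounds for \stepref{line:sort_4} this gives $0+1+8+2+0+16+8+2=37$. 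Your $1$-round claim for \stepref{line:sort_4} would need the (true but unproved by you) fact that each of the $\sqrt{n}$ sorting nodes holds only $\BO(1)$ delimiters; the paper avoids this by a $2$-round dissemination of up to $2\sqrt{n}$ values per holder. Finally, for non-integer $\sqrt{n}$ the overlapping-subsets trick from \theoremref{theorem:idt} does not transfer: sorting is a global problem on the union of all keys and cannot be decomposed into two overlapping sub-instances the way source--destination pairs can; the paper instead works with subsets of size $\lfloor\sqrt{n}\rfloor$ and absorbs the slack by a constant-factor increase in message size.
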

\begin{proof}
We discuss the special case of $\sqrt{n}\in \N$ first, to which we can apply
\algref{algo:sort}. Correctness of the algorithm follows analogously to
\lemmaref{lemma:few_keys_correct}. Steps~\ref{line:sort_1} and
\ref{line:sort_5} require local computations only. \stepref{line:sort_2}
involves one round of communication. \stepref{line:sort_3} calls
\algref{algo:sub_sort}, which by \lemmaref{lemma:few_keys} consumes $10$ rounds.
However, we can skip the last step of the algorithm and instead directly execute
\stepref{line:sort_4}. This takes merely $2$ rounds, since there are
$\sqrt{n}$ nodes each of which needs to announce at most $2\sqrt{n}$ values to
all nodes and we can bundle two values in one message. Regarding
\stepref{line:sort_6}, observe that, analogously to \lemmaref{lemma:balanced},
we have for each $j\in \{1,\ldots,\sqrt{n}\}$ that
\begin{equation*}
\left|\bigcup_{i\in V}K_{i,j}\right|=\sum_{i\in V}|K_{i,j}|<
\sqrt{n}(n+|V|)=2n^{3/2}.
\end{equation*}
Hence, each node needs to send at most $n$ keys and receive at most $2n$ keys.
Bundling up to two keys in each message, nodes need to send and receive at most
$n$ messages. Therefore, by \theoremref{theorem:idt}, \stepref{line:sort_6} can
be completed within $16$ rounds. \stepref{line:sub_7} again calls
\algref{algo:sub_sort}, this time in parallel for all sets $W$. Nonetheless, by
\lemmaref{lemma:few_keys} this requires $10$ rounds only because the edges used
for communication are disjoint. Also here, we can skip the last step of the
subroutine and directly move on to \stepref{line:sort_8}. Again,
\corollaryref{coro:2_round} implies that this step can be completed in $2$
rounds. Overall, the algorithm runs for $0+1+8+2+0+16+8+2=37$ rounds.

With respect to non-integer values of $\sqrt{n}$, observe that we can increase
message size by any constant factor to accommodate more keys in each message.
This way we can work with subsets of size $\lfloor \sqrt{n}\rfloor$ and
similarly select keys and delimiters in Steps~\ref{line:sort_1}
and~\ref{line:sort_4} such that the adapted algorithm can be completed in $37$
rounds as well.
\end{proof}

We conclude this section with a corollary stating that the slightly modified
task of determining each input key's position in a global enumeration of the
\emph{different} keys that are present in the system can also be solved
efficiently. Note that this implies constant-round solutions for determining
modes and selection as well.
\begin{corollary}\label{coro:sort}
Consider the variant of \probref{prob:sort} in which each node is required to
determine the index of its input keys in the total order of the union of all
input keys. This task can be solved deterministically in a constant number of
rounds.
\end{corollary}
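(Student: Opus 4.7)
The plan is to reduce the distinct-rank variant to \theoremref{theorem:sort} and \theoremref{theorem:idt}, together with a single constant-round prefix-sum step to collapse duplicates.

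First, before invoking \algref{algo:sort}, I would have each node $j$ augment each of its input keys by the tag $(j,\ell)$ recording the originating node and the local index $\ell\in\{1,\ldots,n\}$. Since this blows up message size only by a constant factor, \theoremref{theorem:sort} still runs in $\BO(1)$ rounds after a suitable re-tuning of constants. Sorting is performed with respect to the keys themselves (using the tags as tiebreakers, as in the footnote of \probref{prob:sort}); afterwards node $i$ holds a sorted block of $n$ tagged keys corresponding to global sorted positions $(i-1)n+1,\ldots,in$.

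Second, I would compute for each position its rank among the \emph{distinct} input values. Each node $i$ locally counts the number $c_i$ of positions in its block whose key value strictly exceeds that of the previous position; to make this well-defined at the left boundary, node $i-1$ sends its largest key to node $i$ in a single round. Each node then broadcasts $c_i$ to every other node using \corollaryref{coro:2_round} in $2$ rounds, after which every node can locally form the prefix sums $C_i:=\sum_{i'<i}c_{i'}$ and assign to the $t$-th key in its block the distinct-value rank $C_i$ plus the number of new distinct values among the first $t$ keys of the block.

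Third, to deliver the computed ranks to the original holders, node $i$ forms for each tagged key the triple $(j,\ell,r)$ and sends it to node $j$. Because every input key carries a unique tag $(j,\ell)$, every node is both source and destination of exactly $n$ such messages, so \theoremref{theorem:idt} delivers them all in $16$ rounds; each source then attaches the received rank to its $\ell$-th input key. The only delicate point is the handling of duplicates that straddle a block boundary, which the single-round exchange of boundary keys resolves cleanly. Once every node knows the distinct-value rank of each of its keys, selection (having a specific rank broadcast by the unique node holding it) and mode-finding (aggregating multiplicities of each distinct value via one further prefix-sum style pass and taking a global maximum) follow in $\BO(1)$ additional rounds.
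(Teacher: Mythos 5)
Your proposal is correct and follows essentially the same route as the paper's proof: sort via \theoremref{theorem:sort}, broadcast a constant amount of per-node summary information so that every node can locally compute the distinct-value ranks of its block (the paper broadcasts each block's extremal keys, their multiplicities, and the distinct-key count, where you exchange boundary keys and broadcast per-block new-distinct counts for a prefix sum---the same idea), and then route the ranks back to the original holders via \theoremref{theorem:idt}. The only cosmetic difference is that broadcasting the single value $c_i$ needs just one direct round rather than an invocation of \corollaryref{coro:2_round}.
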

\begin{proof}
After applying the sorting algorithm, each node announces (i) its smallest and
largest key, (ii) how many copies of each of these two keys it holds, and (iii)
the number of distinct keys it holds to all other nodes. This takes one round,
and from this information all nodes can compute the indices in the
non-repetitive sorted sequence for their keys. Applying
\theoremref{theorem:idt}, we can inform the nodes whose input the keys were of
these values in a constant number of rounds.
\end{proof}

\section{Local Computations and Memory Requirements}\label{sec:computations}

Examining Algorithms~\ref{algo:high-level} and~\ref{algo:step_2} and how we
implemented their various steps, it is not hard to see that all computations
that do not use the technique of constructing some bipartite multigraph and
coloring its edges merely require $\BO(n)$ computational steps (and thus, as
all values are of size $\BO(\log n)$, also $\BO(n \log n)$ memory). Leaving the
work and memory requirements of local sorting operations aside, the same applies
to Algorithms~\ref{algo:sub_sort} and~\ref{algo:sort}. Assuming that an
appropriate sorting algorithm is employed, the remaining question is how
efficiently we can implement the steps that do involve coloring.

The best known algorithm to color a bipartite multigraph $H=(V,E)$ of maximum
degree $\Delta$ with $\Delta$ colors requires $\BO(|E|\log \Delta)$
computational steps~\cite{cole01}. Ensuring that $|E|\in \BO(n)$ in all cases
where we appeal to the procedure will thus result in a complexity of $\BO(n\log
n)$. Unfortunately, this bound does not hold for the presented algorithms. More
precisely, \stepref{line:high_3} of \algref{algo:high-level} and
Steps~\ref{line:step_2_2} and~\ref{line:step_2_4} of \algref{algo:step_2}
violate this condition. Let us demonstrate first how this issue can be resolved
for \stepref{line:high_3} of \algref{algo:high-level}.
\begin{lemma}\label{lemma:computation_high_3}
Steps~\ref{line:high_3} and~\ref{line:high_4} of \algref{algo:high-level} can be
executed in $3$ rounds such that each node performs $\BO(n)$ steps of local
computation.
\end{lemma}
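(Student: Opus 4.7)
The plan is to fuse Steps~\ref{line:high_3} and~\ref{line:high_4} into a single $3$-round procedure and to replace every invocation of Corollary~\ref{coro:2_round} that would otherwise touch a multigraph with $\Theta(n^{3/2})$ edges by one on an $\BO(n)$-edge multigraph, colored by the greedy $2d-1$ variant mentioned in the footnote of Section~\ref{sec:routing} (which runs in $\BO(|E|)$ steps with per-vertex bitset bookkeeping). The two savings must cooperate: the $3$-round schedule requires that one of the previously-separate ``within-$W$ rebalancing'' hops of Step~\ref{line:high_3} be absorbed into the cross-set delivery of Step~\ref{line:high_4}, and the $\BO(n)$ computation budget requires that no auxiliary multigraph ever carry more than $\BO(n)$ edges.

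Concretely, I would structure the three rounds as follows. Rounds 1 and 2 perform the count exchange inside each $W$ via one application of Corollary~\ref{coro:2_round} to the bipartite multigraph whose edges are the $\BO(n)$ count values each node of $W$ has to broadcast to the other nodes of $W$; this multigraph has $\BO(n)$ edges and degree $\sqrt n$, so the greedy variant colors it in $\BO(n)$ steps. After Round~2, every node of $W$ knows the full $\sqrt n \times \sqrt n$ count matrix of its set and can compute, deterministically and in $\BO(1)$ time per message, both (i) the within-$W$ intermediate that Step~\ref{line:high_3} would assign to each message and (ii) the node of $W'$ to which this intermediate would forward the message in Step~\ref{line:high_4}. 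Round~3 then bypasses the intermediate: each source in $W$ sends each of its messages directly across to the designated node of $W'$. The canonical assignment (each node of $W$ receiving $\sqrt n$ messages per target set $W'$ from Step~\ref{line:high_3}, each node of $W'$ receiving one message per source $u\in W$ from Step~\ref{line:high_4}) composes to a map that places at most one message on every edge of the $W\times W'$ bipartite pattern, so Round~3 respects the bandwidth constraint and may be scheduled in parallel across all pairs $(W,W')$.

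The main obstacle is certifying that the Round~3 ``bypass'' really does fit in one round, i.e., that along every edge $(u,v)$ with $u\in W$ and $v\in W'$ at most one message is sent. This follows from the fact that the two canonical assignments, applied in sequence, induce a one-to-one correspondence between the $\sqrt n$ messages of $u$ destined for $W'$ and the $\sqrt n$ nodes of $W'$; because each node of $W$ already knows the full count matrix after Round~2, it can carry out this correspondence without further communication. Adding up the local work, each node spends $\BO(n)$ on bucket-sorting its messages by target set, $\BO(n)$ on the greedy edge-coloring inside Corollary~\ref{coro:2_round}, and $\BO(1)$ per message for the arithmetic that realises the canonical assignment, yielding $\BO(n)$ steps of local computation per node in total.
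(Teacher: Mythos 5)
Your Round~3 ``bypass'' does not fit in one round, and this breaks the proof. After \stepref{line:high_2} of \algref{algo:high-level}, the only guarantee is that the set $W$ \emph{collectively} holds exactly $n$ messages for each destination set $W'$; a single node $u\in W$ may hold far more than $\sqrt{n}$ of them (in the extreme, all $n$ of its messages could be destined for the same $W'$). The canonical assignment of \stepref{line:high_3} followed by \stepref{line:high_4} is injective on the edges $(w,v)$ with $w$ the \emph{intermediate} in $W$ and $v\in W'$, not on the edges $(u,v)$ with $u$ the \emph{original holder}. If $u$ holds $k>\sqrt{n}$ messages for $W'$, then by pigeonhole at least two of them are routed (through different intermediates) to the same $v\in W'$, so the direct send $u\to v$ would put two or more messages on one edge --- in the worst case $\Theta(\sqrt{n})$ of them. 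The within-$W$ rebalancing of \stepref{line:high_3} is a genuine physical data movement that cannot be composed away; it is precisely what makes the one-message-per-edge delivery of \stepref{line:high_4} possible. (A smaller slip: the count-exchange multigraph you invoke in Rounds 1--2 has $|W|\cdot n=n^{3/2}$ edges, not $\BO(n)$; this happens to be harmless because that pattern is oblivious and needs no coloring at all, but it shows the bookkeeping is off.)

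The paper's actual argument goes the opposite way: it drops the count exchange entirely rather than the rebalancing. Each node bucket-sorts its messages by destination set ($\BO(n)$ work) and redistributes them within $W$ by a \emph{fixed, input-independent round-robin} pattern, implemented in $2$ rounds by relaying each message through one node of $V$. No counts and no coloring are needed because the pattern is oblivious. The price is that the resulting balance is only approximate --- each node of $W$ ends up with at most $2\sqrt{n}$ (rather than exactly $\sqrt{n}$) messages per $W'$ --- which is absorbed by sending two messages per edge in \stepref{line:high_4}, i.e., by a constant-factor increase in message size. That yields $2+1=3$ rounds and $\BO(n)$ computation. If you want to salvage your draft, you would have to keep some physical within-$W$ hop, at which point you are back to at least the paper's structure.
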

\begin{proof}
Each node locally orders the messages it holds according to their destination
sets $W'$; using bucketsort, this can be done using $\BO(n)$ computational
steps. According to this order, it moves its messages to the nodes in $W$
following a round-robin pattern. In order to achieve this in $2$ rounds, it
first sends to each other node in the system one of the messages; in the second
round, these nodes forward these messages to nodes in $W$. Since an appropriate
communication pattern can be fixed independently of the specific distribution of
messages, no extra computations are required.

Observe that in the resulting distribution of messages, no node in $W$ holds
more than $2\sqrt{n}$ messages for each set $W'$: For every full $\sqrt{n}$
messages some node in $W$ holds for set $W'$, every node in $W$ gets exactly one
message destined for $W'$, plus possible one residual message for each node in
$W$ that does not hold an integer multiple of $\sqrt{n}$ messages for $W'$.
Hence, moving at most two messages across each edge in a single round,
\stepref{line:high_4} can be completed in one round.
\end{proof}
Note that we save two rounds for \stepref{line:high_3} in comparison to
\corollaryref{coro:computation_step_2_4}, but at the expense of doubling the
message size in \stepref{line:high_4}.

The same argument applies to \stepref{line:step_2_4} of \algref{algo:step_2}.
\begin{corollary}\label{coro:computation_step_2_4}
Steps~\ref{line:step_2_3} to~\ref{line:step_2_5} of \algref{algo:step_2} can be
executed in $2$ rounds, where each node performs $\BO(n)$ steps of local
computation.
\end{corollary}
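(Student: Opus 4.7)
The plan is to mirror the argument of \lemmaref{lemma:computation_high_3}, now applied to the \emph{intermediate} destination sets prescribed by the pattern of \stepref{line:step_2_2} rather than to the final destinations addressed by \stepref{line:high_3}. The key move is to bypass both the per-node count announcement of \stepref{line:step_2_3} and the explicit edge coloring of \stepref{line:step_2_4}, replacing them by purely local computations each node in $W$ can perform from its own input together with the globally-known aggregate counts provided by \stepref{line:step_2_1}.

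Concretely, I would first have each node in $W$ bucketsort its $n$ messages by final destination set $W'$ in $\BO(n)$ local steps, and then, within each such bucket, assign each message to one of the $\sqrt{n}$ intermediate destination sets by a round-robin rule on its local within-bucket index modulo $\sqrt{n}$. Since this rule is deterministic and uses only information local to the node, no communication is needed to compute it and \stepref{line:step_2_3} can be skipped entirely; a second $\BO(n)$-time bucketsort by the thus-assigned intermediate destination leaves each node ready to execute the within-$W$ redistribution of \stepref{line:step_2_5}. The redistribution itself then reuses the two-round forwarding scheme of \lemmaref{lemma:computation_high_3} verbatim: in the first round, each source sends one message to each node of $V$ using a fixed, position-based pattern, and in the second round each intermediary forwards its received message to the correct recipient in $W$. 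Because both patterns are oblivious to message contents, no further computation is incurred beyond the bucketsorts, and at most one message crosses each edge per round.

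The main technical point to verify --- and the aspect I expect to take a little care --- is that replacing the exact coloring by local round-robin inflates the load of each target in $W$ per intermediate destination by at most a constant factor. Repeating the counting argument of \lemmaref{lemma:computation_high_3}, each of the $\sqrt{n}$ sources in $W$ contributes at most one extra message per intermediate destination over a perfectly even split, so each node of $W$ ends up holding at most $2\sqrt{n}$ messages per intermediate destination set. As a consequence, \stepref{line:step_2_6} still completes in a single round provided we allow up to two messages per edge, doubling the message size in exactly the same way \lemmaref{lemma:computation_high_3} trades message size for rounds in \stepref{line:high_4}. This constant-factor bandwidth overhead preserves the $\BO(\log n)$ message size bound and composes cleanly with the remaining steps of \algref{algo:high-level}.
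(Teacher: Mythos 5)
The engine of your argument---drop the count announcement and the edge coloring, bucketsort in $\BO(n)$ steps, and redistribute within $W$ by an oblivious two-round round-robin forwarding whose constant-factor load imbalance is paid for by doubling the message size in \stepref{line:step_2_6}---is exactly the intended argument: the corollary is meant to be proved by transplanting \lemmaref{lemma:computation_high_3} verbatim, with the intermediate destination sets of the \stepref{line:step_2_2} pattern playing the role that the final destination sets play in \stepref{line:high_3}.

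However, you overstep the corollary's scope in a way that damages your own bounds. Steps~\ref{line:step_2_3} to~\ref{line:step_2_5} take the inter-set pattern of \stepref{line:step_2_2} as \emph{given}; its efficient computation is handled separately (\lemmaref{lemma:computation_step_2}), and that pattern is what guarantees that \emph{exactly} $n$ messages travel from each $W$ to each intermediate set. By instead assigning intermediate destinations yourself---round-robin within each final-destination bucket, locally and independently at each node---you forfeit this exactness: each node contributes its per-bucket share up to an additive $1$ per bucket, so the total from $W$ to a fixed intermediate set can deviate from $n$ by up to $\sqrt{n}\cdot\sqrt{n}=n$, i.e., it ranges over $[0,2n]$. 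Consequently a node of $W$ can end up holding up to $3\sqrt{n}$ (not $2\sqrt{n}$, as you claim) messages per intermediate set, and, more importantly, the invariant that after \stepref{line:high_2} each subset holds exactly $n$ messages per final-destination subset---which \stepref{line:high_3} relies on---is also only preserved up to a factor of $2$. All of this is absorbable by further constant factors in message size, but you verify none of it beyond \stepref{line:step_2_6}. The clean fix is simply to bucketsort by the intermediate destinations that the (modified) \stepref{line:step_2_2} already prescribes for each node's own messages; then the counting argument of \lemmaref{lemma:computation_high_3} applies unchanged and yields the stated $2\sqrt{n}$ bound.
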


\stepref{line:step_2_2} of \algref{algo:step_2} requires a different approach
still relying on our coloring construction.
\begin{lemma}\label{lemma:computation_step_2}
A variant of \algref{algo:step_2} can execute \stepref{line:high_2} of
\algref{algo:high-level} in $5$ rounds using $\BO(n\log n)$ steps of local
computation and memory bits at each node.
\end{lemma}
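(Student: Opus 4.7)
The plan is to modify Algorithm~\ref{algo:step_2} by replacing only its expensive computational step---Step~\ref{line:step_2_2}, which currently constructs an edge coloring of a bipartite multigraph with $n^{3/2}$-regular degree and hence $n^2$ edges---by a coloring of a much sparser multigraph, while combining Steps~\ref{line:step_2_3}--\ref{line:step_2_5} via Corollary~\ref{coro:computation_step_2_4}. Step~\ref{line:step_2_1} is left unchanged, contributing $2$ rounds and $\BO(n)$ local computation to aggregate and broadcast the matrix $A$ of per-pair counts $A_{ij}$ (messages in set $W_i$ destined for set $W'_j$).

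For Step~\ref{line:step_2_2} I would work at the granularity of \emph{bundles} of $n$ messages. Define a bipartite multigraph $H$ on $\sqrt{n}$ vertices per side whose multiplicity between $W_i$ and $W'_j$ is $\lfloor A_{ij}/n \rfloor$. The row and column sums of this multiplicity matrix are at most $\sqrt{n}$, and the deficits (row-slack $s_i=\sqrt{n}-\sum_j\lfloor A_{ij}/n\rfloor$ and column-slack $s'_j=\sqrt{n}-\sum_i\lfloor A_{ij}/n\rfloor$) satisfy $\sum_i s_i=\sum_j s'_j$. Hence a non-negative integer ``dummy'' matrix with these marginals can be produced greedily in $\BO(n)$ time, padding $H$ into a $\sqrt{n}$-regular bipartite multigraph with $|E(H)|\le 2n$. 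Koenig's theorem, applied with the algorithm of Cole et al., yields an edge coloring of $H$ with $\sqrt{n}$ colors in $\BO(|E(H)|\log\sqrt{n})=\BO(n\log n)$ time. Each color class $c$ is a perfect matching that assigns to every source set $W_i$ either (i) a destination set $W'_j$, meaning $W_i$ should send a full bundle of $n$ messages for $W'_j$ to intermediate set $W_c$, or (ii) a dummy slot, meaning $W_i$ should send a bundle of $n$ ``remainder'' messages (drawn from its pools $r_{ij}=A_{ij}-n\lfloor A_{ij}/n\rfloor$) to $W_c$. All nodes perform this construction deterministically from $A$ and obtain identical colorings, so no further communication is needed.

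Steps~\ref{line:step_2_3}--\ref{line:step_2_5} are then replaced by one invocation of Corollary~\ref{coro:computation_step_2_4}, which performs within-set redistribution in $2$ rounds with $\BO(n)$ computation by a round-robin scheme. Step~\ref{line:step_2_6} executes the cross-set move in $1$ round, with at most a constant-factor increase of message size to absorb the small per-edge imbalance introduced when remainder messages are packed into mixed bundles. Summing up yields $2+0+2+1=5$ communication rounds, the edge coloring of $H$ dominates both the $\BO(n\log n)$ computation and the $\BO(n\log n)$ memory footprint (needed for $A$ and the coloring itself), and messages stay of size $\BO(\log n)$.

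The main obstacle is ensuring that the remainder messages, which ride along on the dummy edges as heterogeneous bundles, do not overload any intermediate set per destination beyond what Step~\ref{line:high_3} of Algorithm~\ref{algo:high-level} can absorb. Because for each source $W_i$ the total remainder mass is exactly $ns_i$, these can be packed into exactly $s_i$ bundles of size $n$ each, and the column-slack structure of the padding ensures that the resulting load on any intermediate--destination pair deviates from $n$ by at most a constant factor. This is enough for the downstream steps (themselves handled by Lemma~\ref{lemma:computation_high_3}) to complete correctly once the message size is inflated by a suitable constant.
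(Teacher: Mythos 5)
Your core idea for \stepref{line:step_2_2} --- grouping messages into bundles of $n$ so that the bipartite multigraph shrinks to $\BO(n)$ edges of maximum degree $\sqrt{n}$, colorable in $\BO(n\log n)$ steps via~\cite{cole01}, combined with \corollaryref{coro:computation_step_2_4} for the within-set redistribution --- is exactly the paper's idea, and the round count $2+0+2+1=5$ matches. The divergence, and the gap, lies in how you treat the remainders $r_{ij}=A_{ij}-n\lfloor A_{ij}/n\rfloor$. You pad the multigraph to $\sqrt{n}$-regularity with dummy edges whose column marginals are the slacks $s'_j$, and you ship the remainders as \emph{heterogeneous} bundles riding on those dummy edges. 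But the column slot of a dummy edge bears no relation to the destinations of the messages packed into the bundle it carries: the slacks only match the remainder mass in aggregate, not per bundle. Consequently the claim that ``the resulting load on any intermediate--destination pair deviates from $n$ by at most a constant factor'' is false in the worst case. For instance, if $r_{i1}=n/2$ for every source set $W_i$, the remainders destined for $W'_1$ total $n^{3/2}/2$, and nothing in your construction prevents every source from placing its $W'_1$-remainders into the dummy bundle that happens to receive color $c$; then the single intermediate set $W_c$ ends up holding $\Theta(n^{3/2})$ messages for $W'_1$. Since only $n$ edges connect $W_c$ to $W'_1$, \stepref{line:high_4} would then need $\Theta(\sqrt{n})$ rounds no matter how Steps~\ref{line:high_3}--\ref{line:high_4} are implemented, so this is not absorbable by a constant-factor message-size inflation.

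The paper sidesteps this entirely: the fewer than $n$ residual messages that each pair $(W,W')$ contributes are never routed through an intermediate set at all, but are sent \emph{directly} from $W$ to their destination set $W'$ over the $n$ edges joining the two sets (with the bookkeeping folded into Steps~\ref{line:step_2_3}--\ref{line:step_2_5}). To repair your argument you would either need to adopt that fix, or replace your marginal-based padding by one dummy edge from $i$ to $j$ for each nonzero $r_{ij}$ (giving degree at most $2\sqrt{n}$, hence $2\sqrt{n}$ colors and at most $2n$ messages per intermediate--destination pair), so that each dummy bundle is homogeneous in destination. As written, the balancing step is genuinely missing.
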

\begin{proof}
As mentioned before, the critical issue is that Steps~\ref{line:step_2_2}
and~\ref{line:step_2_4} of \algref{algo:step_2} rely on bipartite graphs with
too many edges. \corollaryref{coro:computation_step_2_4} applies to
\stepref{line:step_2_4}, so we need to deal with \stepref{line:step_2_2} only.

To reduce the number edges in the graph, we group messages from $W$ to $W'$
into sets of size $n$. Note that not all respective numbers are integer
multiples of $n$, and we need to avoid ``incomplete'' sets of smaller
size as otherwise the number of edges still might be too large. This is
easily resolved by dealing with such ``residual'' messages by directly sending
them to their destinations: Each set will hold less than $n$ such messages for
each destination set $W'$ and therefore can deliver these messages using its $n$
edges to set $W'$.\footnote{The nodes account for such messages as well when
performing the redistribution of messages within $W$ in
Steps~\ref{line:step_2_3} to~\ref{line:step_2_5}.}

It follows that the considered bipartite multigraph will have $\BO(n)$ edges and
maximum degree $\sqrt{n}$. It remains to argue why all steps can be performed
with $\BO(n\log n)$ steps and memory at each node. This is obvious for
\stepref{line:step_2_1} and \stepref{line:step_2_6} and follows from
\corollaryref{coro:computation_step_2_4} for Steps~\ref{line:step_2_3}
to~\ref{line:step_2_5}. Regarding \stepref{line:step_2_2}, observe that the
bipartite graph considered can be constructed in $\BO(n)$ steps since this
requires adding $\sqrt{n}$ integers for each of the $\sqrt{n}$ destination sets
(and determining the integer parts of dividing the results by $n$). Applying the
algorithm from~\cite{cole01} then colors the edges within $\BO(n\log n)$ steps.
Regarding memory, observe that all other steps require $\BO(n)$ computational
steps and thus trivially satisfy the memory bound. The algorithm
from~\cite{cole01} computes the coloring by a recursive divide and conquer
strategy; clearly, an appropriate implementation thus will not require more than
$\BO(n\log n)$ memory either.
\end{proof}

We conclude that there is an implementation of our scheme that is
simultaneously efficient with respect to running time, message size, local
computations, and memory consumption.
\begin{theorem}
\probref{prob:idt} can be solved deterministically within $12$ rounds, where
each node performs $\BO(n\log n)$ steps of computation using $\BO(n\log n)$
memory bits.
\end{theorem}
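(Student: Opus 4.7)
The plan is to combine round- and computation-efficient implementations of the five steps of \algref{algo:high-level}. The initial partitioning step requires no communication and only $\BO(n)$ local work. \lemmaref{lemma:computation_step_2} implements \stepref{line:high_2} in $5$ rounds using $\BO(n\log n)$ computation and memory per node, \lemmaref{lemma:computation_high_3} jointly handles \stepref{line:high_3} and \stepref{line:high_4} in $3$ rounds using only $\BO(n)$ local work (doubling message size in \stepref{line:high_4}, but keeping it within $\BO(\log n)$), and \stepref{line:high_5} is executed by \corollaryref{coro:4_round} in parallel on each subset $W$ of size $\sqrt{n}$ in $4$ rounds. Summing gives the claimed $0+5+3+4=12$ rounds, and the edge-locality guarantee of \corollaryref{coro:4_round} ensures that the parallel invocations over the $\sqrt{n}$ disjoint sets $W$ do not interfere.

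The main obstacle is to establish the $\BO(n\log n)$ computation and memory bound at \stepref{line:high_5}. A naive instantiation of \corollaryref{coro:4_round} inside a set of size $\sqrt{n}$ invokes \corollaryref{coro:2_round} on a bipartite multigraph with $\BO(n^{3/2})$ edges, which even Cole's algorithm~\cite{cole01} colors using $\Omega(n^{3/2}\log n)$ operations per node---an order of magnitude too many. I plan to overcome this by borrowing the bundling idea of \lemmaref{lemma:computation_step_2}: once the per-pair message counts have been exchanged within $W$ (a preliminary step whose graph is uniformly regular and therefore admits an explicit $\BO(n)$-time Latin-square coloring), group the messages between each pair of nodes in $W$ into bundles of exactly $\sqrt{n}$ and color only the bundle multigraph, which has $\BO(n)$ edges and maximum degree $\sqrt{n}$, so that~\cite{cole01} runs in $\BO(n\log n)$ steps and memory per node.

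It then remains to deliver the at most $\sqrt{n}-1$ residual messages per $(\text{sender},\text{destination})$ pair, of which there are fewer than $n$ per sender and fewer than $n$ per destination in total. These can be routed analogously to the round-robin construction used in \lemmaref{lemma:computation_high_3}: each sender bucket-sorts its residuals by destination in $W$, distributes them round-robin through $V\setminus W$, and has the intermediaries forward them into $W$, where the bucketing together with the round-robin guarantees an $\BO(1)$ per-edge load at each stage. This scheme requires only $\BO(n)$ additional local work per node and preserves edge-locality, so it can be folded into the $4$-round budget of \corollaryref{coro:4_round}. Combining \lemmaref{lemma:computation_step_2}, \lemmaref{lemma:computation_high_3}, and the implementation of \stepref{line:high_5} sketched above then yields the theorem.
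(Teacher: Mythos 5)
Your decomposition and round count are exactly the paper's: the theorem is obtained there by combining \lemmaref{lemma:computation_step_2} ($5$ rounds for \stepref{line:high_2}), \lemmaref{lemma:computation_high_3} ($3$ rounds for Steps~\ref{line:high_3} and~\ref{line:high_4}), and \corollaryref{coro:4_round} ($4$ rounds for \stepref{line:high_5}), giving $0+5+3+4=12$. You are also right to worry about \stepref{line:high_5}: the multigraph colored inside \corollaryref{coro:4_round} for a set of size $\sqrt{n}$ has $n^{3/2}$ edges, a point the paper's Section~5 does not list among the problematic coloring instances. Your bundling of the $W$-internal messages into groups of $\sqrt{n}$ per (sender, destination) pair, yielding a bundle multigraph with at most $n$ edges and degree at most $\sqrt{n}$ that~\cite{cole01} colors in $\BO(n\log n)$ steps, is a sound way to handle the full bundles.

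The gap is in your residual routing. Bucket-sorting the residuals by destination and sending them round-robin through $V\setminus W$ bounds the load only on the first hop. For a fixed destination $d\in W$, its up to $n$ residuals sit at only $\sqrt{n}$ senders, each holding fewer than $\sqrt{n}$ of them as a contiguous block of its sorted list; nothing prevents all $\sqrt{n}$ senders' blocks for $d$ from being mapped to the \emph{same} positions of the round-robin (e.g., when all senders have identical residual profiles). A single intermediary then holds $\sqrt{n}$ residuals all destined for $d$, and the second hop forces load $\sqrt{n}$ on one edge --- the bucketing only avoids collisions among one sender's own messages, not across senders. The fix is to coordinate across senders by reserving disjoint intermediary blocks per pair: let the $i^{th}$ sender place its fewer than $\sqrt{n}$ residuals for the $j^{th}$ destination into distinct intermediaries of block $(i+j) \operatorname{mod} \sqrt{n}$ (each block consisting of $\sqrt{n}$ nodes of $V$). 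For fixed $i$ the blocks are disjoint over $j$, so each sender uses each outgoing edge once; for fixed $j$ they are disjoint over $i$, so each intermediary ends up with at most one residual per destination and the second hop is a direct delivery with unit edge load. This needs only $\BO(n)$ local work and can run concurrently with the bundle routing at a constant-factor cost in message size, after which your argument goes through.
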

This result immediately transfers to \probref{prob:sort}.
\begin{corollary}
\probref{prob:sort} and its variant discussed in \corollaryref{coro:sort} can
be solved in a constant number of rounds, where each node performs $\BO(n\log
n)$ steps of computation using $\BO(n\log n)$ memory bits.
\end{corollary}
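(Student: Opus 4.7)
The plan is to walk through Algorithms~\ref{algo:sub_sort} and~\ref{algo:sort} step by step and show that each of them either performs $\BO(n\log n)$ local work (using at most $\BO(n\log n)$ memory bits) or can be realized by a constant number of invocations of the improved IDT procedure of the preceding theorem, which itself costs $12$ rounds and $\BO(n\log n)$ computation and memory per node. Since each of the two algorithms comprises a constant number of steps, this immediately yields the claimed bounds for \probref{prob:sort}; the variant in \corollaryref{coro:sort} then follows by one additional IDT call.

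For the purely local steps --- Steps~\ref{line:sub_1}, \ref{line:sub_3}, \ref{line:sub_4}, and~\ref{line:sub_7} of \algref{algo:sub_sort} and Steps~\ref{line:sort_1} and~\ref{line:sort_5} of \algref{algo:sort} --- each node manipulates only $\BO(n)$ keys. Local sorting is carried out with merge sort in $\BO(n\log n)$ time and $\BO(n)$ memory, while partitioning by $\sqrt{n}$ delimiters is done by a linear merge between the already-sorted input and the sorted delimiters, in $\BO(n)$ time. For the communication steps, I would replace every call to \corollaryref{coro:2_round} or \corollaryref{coro:4_round} inside \algref{algo:sub_sort} and \algref{algo:sort} by the improved IDT procedure, padding with dummy messages if necessary so that every node is source and destination of exactly $n$ messages. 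The delimiter broadcast in \stepref{line:sort_4} is likewise absorbed into one IDT call, since it moves only $\BO(\sqrt{n})$ values per node.

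The subtle case is \stepref{line:sort_7}, in which \algref{algo:sub_sort} is invoked concurrently on the $\sqrt{n}$ sets $W$ that partition $V$. Since each node belongs to exactly one $W$ and the subroutine routes only messages whose source and destination both lie in its own $W$, I would merge the internal routings of all $\sqrt{n}$ parallel invocations, step by step, into a single combined IDT instance on $V$: each node is source and destination of $\BO(n)$ messages (its share of the work inside its own $W$), and it can assemble its local part of this IDT instance directly from the $\BO(n)$-sized state it already maintains from the previous step. A single improved-IDT call per step of \algref{algo:sub_sort} therefore handles all $\sqrt{n}$ parallel invocations simultaneously, within a constant number of rounds and within the required resource budget.

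The variant of \corollaryref{coro:sort} is handled identically: after the sorted order has been established, the extra rank-dissemination step in the proof of \corollaryref{coro:sort} is itself an IDT-style redistribution, which we implement via the improved theorem and which preserves the $\BO(n\log n)$ computation/memory bound. The main obstacle I anticipate is not in any one step but in the bookkeeping for the combined IDT instance in \stepref{line:sort_7}: one has to verify that no node must ever materialize a view of the global routing pattern beyond its own $\BO(n)$ source/destination tuples, so that the $\BO(n\log n)$ memory ceiling is respected even while $\sqrt{n}$ sub-sort executions proceed in lockstep on top of one shared routing primitive.
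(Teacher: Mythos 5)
Your proposal is correct, but it takes a different route from the paper's (very terse) argument. The paper's justification is essentially the remark preceding the section's analysis: \algref{algo:sub_sort} and \algref{algo:sort} consist only of (i) local sorting and partitioning of $\BO(n)$ keys per node and (ii) invocations of \corollaryref{coro:2_round}, \corollaryref{coro:4_round} (run in parallel on disjoint subsets $W$), and one invocation of \theoremref{theorem:idt}; since the section has already shown how to implement all of these routing primitives with $\BO(n\log n)$ computation and memory, the corollary ``immediately transfers.'' You instead collapse every communication step --- including the $\sqrt{n}$ concurrent executions of \algref{algo:sub_sort} in \stepref{line:sort_7} --- into a single global instance of \probref{prob:idt} and invoke the improved routing theorem once per step. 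This is a valid alternative: the sub-instances live on disjoint node sets, each node sends and receives $\BO(n)$ messages per step after bundling a constant number of keys per message, and the paper's remark that ``at most $n$'' messages per node suffices disposes of your padding concern. A constant number of such calls keeps rounds, computation, and memory within the claimed bounds, and your worry about the combined instance is unfounded for exactly the reason you suspect: in \probref{prob:idt} only sources need to know destinations, so no node ever materializes a global routing pattern. What your route loses is the tight round count (each replaced $2$- or $4$-round primitive now costs $12$ rounds) and the edge-disjointness structure the paper exploits; what it gains is that only one routing primitive needs to be certified as computationally efficient.
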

\section{Varying Message and Key Size}\label{sec:size}

In this section, we discuss scenarios where the number and size of messages and
keys for Problems~\ref{prob:idt} and~\ref{prob:sort} vary. This also motivates
to reconsider the bound on the number bits that nodes can exchange in each
round: For message/key size of $\Theta(\log n)$, communicating $B\in \BO(\log
n)$ bits over each edge in each round was shown to be sufficient, and for
smaller $B$ the number of rounds clearly must increase
accordingly.\footnote{Formally proving a lower bound is trivial in both cases,
as nodes need to communicate their $n$ messages to deliver all messages or their
$n$ keys to enable determining the correct indices of all keys, respectively.}
We will see that most ranges for these parameters can be handled asymptotically
optimally by the presented techniques. For the remaining cases, we will give
solutions in this section. We remark that one can easily verify that the
techniques we propose in the sequel are also efficient with respect to local
computations and memory requirements.

\subsection{Large Messages or Keys}

If messages or keys contain $\omega(\log n)$ bits and $B$ is not sufficiently
large to communicate a single value in one message, splitting these values into
multiple messages is a viable option. For instance, with bandwidth $B\in
\Theta(\log n)$, a key of size $\Theta(\log^2 n)$ would be split into
$\Theta(\log n)$ separate messages permitting the receiver to reconstruct the
key from the individual messages. This simple argument shows that in fact not
the total number of messages (or keys) is decisive for the more general versions
of Problems~\ref{prob:idt} and~\ref{prob:sort}, but the number of bits that need
to be sent and received by each node. If this number is in $\Omega(n \log n)$,
the presented techniques are asymptotically optimal.

\subsection{Small Messages}
If we assume that in \probref{prob:sort} the size of messages is bounded by
$M\in o(\log n)$, we may hope that we can solve the problem in a constant number
of rounds even if we merely transmit $B\in \BO(M)$ bits along each edge.
With the additional assumption that nodes can identify the sender of a message
even if the identifier is not included, this can be achieved if sources and
destinations of messages are known in advance: We apply
\corollaryref{coro:2_round} and observe that because the communication pattern
is known to all nodes, knowing the sender of a message is sufficient to perform
the communication and infer the original source of each message at the
destination.

On the other hand, if sources/destinations are unknown, consider inputs where
$\Omega(n^2)$ messages cannot be sent directly from their sources to their
destinations (i.e., using the respective source-receiver edge) within a constant
number of rounds. Each of these messages needs to be forwarded in a way
preserving their destination, i.e., at least one of the forwarding nodes must
learn about the destination of the message (otherwise correct delivery cannot be
guaranteed). Explicitly encoding these values for $\Omega(n^2)$ messages
requires $\Omega(n^2\log n)$ bits. Implicit encoding can be done by means of the
round number or relations between the communication partners' identifiers.
However, encoding bits by introducing constraints reduces (at least for
worst-case inputs) the number of messages that can be sent by a node
accordingly. These considerations show that in case of \probref{prob:idt}, small
messages do not simplify the task.

\subsection{Small Keys}
The situation is different for \probref{prob:sort}. Note that we need to drop
the assumption that all keys can be distinguished, as this would necessitate key
size $\Omega(\log n)$. In contrast, if keys can be encoded with $o(\log n)$
bits, there are merely $n^{o(1)}$ different keys. Hence, we can statically
assign disjoint sets of $\log^2 n$ nodes to each key $\kappa$ (for simplicity we
assume that $\log n$ is integer). In the first round, each node binary encodes
the number of copies it holds of $\kappa$ and sends the $i^{th}$ bit to $\log n$
of these nodes. The $j^{th}$ of the $\log n$ receiving nodes of bit $i$ counts
the number of nodes which sent it a $1$, encodes this number binary, and
transmits the $j^{th}$ bit to all nodes. With this information, all nodes are
capable of computing the total number of copies of $\kappa$ in the system.

In order to assign an order to the different copies of $\kappa$ in the system
(if desired), in the second round we can require that in addition the $j^{th}$
node dealing with bit $i$ sends to node $k\in \{1,\ldots,n\}$ the $j^{th}$ bit
of an encoding of the number of nodes $k'\in \{1,\ldots,k-1\}$ that sent a $1$
in the first round. This way, node $k$ can also compute the number of copies of
$\kappa$ held by nodes $k'<k$, which is sufficient to order the keys as
intended.

It is noteworthy that this technique can actually be used to order a much larger
total number of keys, since we ``used'' very few of the nodes. If we have $K\leq
n/\log^2 n$ different keys, we can assign $m:=\lfloor n/K\rfloor$ nodes to each
key. This permits to handle any binary encoding of up to $\lfloor
\sqrt{m}\rfloor$ many bits in the above manner, potentially allowing for huge
numbers of keys. At the same time, messages contain merely $2$ bits (or a single
bit, if we accept $3$ rounds of communication). More generally, each node can be
concurrently responsible for $B$ bits, improving the power of the approach
further for non-constant values of $B$.

\section*{Acknowledgement}
The author would like to thank Shiri Chechic, Quentin Godfroy, Merav Parter, and
Jukka Suomela for valuable discussions. This material is based upon work
supported by the National Science Foundation under Grant Nos.\ CCF-AF-0937274,
CNS-1035199, 0939370-CCF and CCF-1217506,\linebreak the AFOSR under Contract
No.\ AFOSR Award number FA9550-13-1-0042, the Swiss National Science Foundation (SNSF), the
Swiss Society of Friends of the Weizmann Institute of Science, and the German
Research Foundation (DFG, reference number Le 3107/1-1).

\bibliographystyle{abbrv}
\bibliography{sorting}

\end{document}